\definecolor{DarkBlue}{rgb} {0,0,0.5}
\newcommand{\Tr}{{\mathrm{Tr}}}
\newtheorem{thm}{Theorem}
\newtheorem{clm}[thm]{Claim}
\newtheorem{lem}[thm]{Lemma}
\newtheorem{cor}[thm]{Corollary}
\newcommand{\cA}{{\cal A}}
\newcommand{\cB}{{\cal B}}
\newcommand{\cX}{{\cal X}}
\newcommand{\cD}{{\cal D}}
\newcommand{\cY}{{\cal Y}}
\newcommand{\cT}{{\cal T}}
\newcommand{\cZ}{{\cal Z}}
\renewcommand{\S}{{\mathbb S}}
\newcommand{\Rep}{\mathrm{Rep}}
\newcommand{\Ind}{\mathop{\mathrm{Ind}}}
\newcommand{\Res}{\mathop{\mathrm{Res}}}
\renewcommand{\O}{{\mathrm O}}
\newcommand{\R}{{\mathbb R}}
\newcommand{\I}{{\mathbb I}}
\newcommand{\refthm}[1]{Theorem~\ref{thm:#1}}
\newcommand{\reflem}[1]{Lemma~\ref{lem:#1}}
\newcommand{\refcor}[1]{Corollary~\ref{cor:#1}}
\newcommand{\refclm}[1]{Claim~\ref{clm:#1}}
\newcommand{\refsec}[1]{Section~\ref{sec:#1}}
\newcommand{\refapp}[1]{Appendix~\ref{app:#1}}
\newcommand{\refeqn}[1]{(\ref{eqn:#1})}
\newcommand\qqAnd{\qquad\text{and}\qquad}
\def \norm|#1|{\left\| #1\right\|}
\begin{document}

\title{Quantum Adversary Lower Bound for Element Distinctness with Small Range}
\author{
\normalsize Ansis Rosmanis\\ [.5ex]
\small David R.\ Cheriton School of Computer Science and \\ 
\small Institute for Quantum Computing, University of Waterloo \\
\small \texttt{ansis.rosmanis@gmail.com}
}
\date{}

\maketitle

\begin{abstract}
The {\sc Element Distinctness} problem is to decide whether each character of an input string is unique. The quantum query complexity of {\sc Element Distinctness} is known to be $\Theta(N^{2/3})$; the polynomial method gives a tight lower bound for any input alphabet, while a tight adversary construction was only known for alphabets of size $\Omega(N^2)$.

We construct a tight $\Omega(N^{2/3})$ adversary lower bound for {\sc Element Distinctness} with minimal non-trivial alphabet size, which equals the length of the input. This result may help to improve lower bounds for other related query problems.
\end{abstract}

\section{Introduction and motivation}

\paragraph{Background.}

In quantum computation, one of the main questions that we are interested in is: What is the quantum circuit complexity of a given computational problem? This question is hard to answer, and so we consider an alternative question: What is the quantum query complexity of the problem?
For many problems, it is seemingly easier to (upper and lower) bound the number of times an algorithm requires to access the input rather than to bound the number of elementary quantum operations required by the algorithm. 
Nonetheless, the study of the quantum query complexity can give us great insights for the quantum circuit complexity. For example, a query-efficient algorithm for {\sc Simon's Problem} \cite{simon:powerOfQuantum} helped Shor to develop a time-efficient algorithm for factoring \cite{shor:factorization}.
On the other hand, $\tilde\Omega(N^{1/5})$ and $\Omega(N^{1/2})$ lower bounds on the (bounded error) quantum query complexity of the {\sc Set Equality} \cite{midrijanis:setEqual} and the {\sc Index Erasure} \cite{ambainis:symmetryAssisted} problems, respectively, ruled out certain approaches for constructing time-efficient quantum algorithms for
the {\sc Graph Isomorphism} problem.

Currently, two main techniques for proving lower bounds on quantum query complexity are the {\em polynomial method} developed by Beals, Buhrman, Cleve, Mosca, and de Wolf~\cite{beals:pol}, and the {\em adversary method} originally developed by Ambainis~\cite{ambainis:adv} in what later became known as the positive adversary method.
 The adversary method was later strengthened by H{\o}yer, Lee,
and \v Spalek~\cite{hoyer:advNegative} by allowing negative weights in the adversary matrix.
In recent results~\cite{reichardt:advTight, lee:stateConversion}, Lee, Mittal, Reichardt, \v Spalek, and Szegedy showed that, unlike the polynomial method \cite{ambainis:polVsQCC}, the general (i.e.,  strengthened) adversary method can give tight lower bounds for all problems. This is a strong incentive for the study of the adversary method.

\paragraph{Element Distinctness and Collision.}

Even though we know that tight adversary (lower) bounds exist for all query problems, for multiple problems we still do not know how to even construct adversary bounds that would match lower bounds obtained by other methods. For about a decade, {\sc Element Distinctness} and {\sc Collision} were prime examples of such problems. 
Given an input string $z\in \Sigma^N$, the {\sc Element Distinctness} problem is to decide whether each character of $z$ is unique, and the {\sc Collision} problem is its special case given a promise that each character of $z$ is either unique or appears in $z$ exactly twice. 
As one can think of $z$ as a function that maps $\{1,2,\ldots,N\}$ to $\Sigma$, the alphabet $\Sigma$ is often also called the {\em range}.

The quantum query complexity of these two problems is known.
Brassard, H{\o}yer, and Tapp first gave an $\O(N^{1/3})$ quantum query algorithm for {\sc Collision} \cite{brassard:collision}. Aaronson and Shi then gave a matching $\Omega(N^{1/3})$ lower bound for {\sc Collision} via the polynomial method, requiring that $|\Sigma|\geq 3N/2$ \cite{shi:collisionLower}. Due to a particular reduction from {\sc Collision} to {\sc Element Distinctness}, their lower bound also implied an $\Omega(N^{2/3})$ lower bound for {\sc Element Distinctness}, requiring that $|\Sigma|\in\Omega(N^2)$. Subsequently, Kutin (for {\sc{}Collision}) and Ambainis (for both) removed these requirements on the alphabet size 
 \cite{kutin:collisionLower,ambainis:collisionLower}.
Finally, Ambainis gave an $\O(N^{2/3})$ quantum query algorithm for {\sc Element Distinctness} based on a quantum walk \cite{ambainis:distinctness}, thus improving the best previously known $\O(N^{3/4})$ upper bound \cite{buhrman:distinctness}.

Hence, the proof of the $\Omega(N^{2/3})$ lower bound for {\sc Element Distinctness} with minimal non-trivial alphabet size $N$ (and, thus, any alphabet size) consists of three steps: 
 an $\Omega(N^{1/3})$ lower bound for {\sc Collision},
 a reduction from an $\Omega(N^{1/3})$ lower bound for {\sc Collision} to  an $\Omega(N^{2/3})$ lower bound for {\sc Element Distinctness} with the alphabet size $\Omega(N^2)$,
 and a reduction of the alphabet size. 
In this paper we prove the same result directly by providing an $\Omega(N^{2/3})$ general adversary bound for {\sc Element Distinctness} with the alphabet size $N$.

The problems of {\sc Set Equality}, {\sc $k$-Distinctness}, and {\sc $k$-Sum} are closely related to {\sc Collision} and {\sc Element Distinctness}. {\sc Set Equality} is a special case of {\sc Collision} given an extra promise that each character of the first half (and, thus, the second half) of the input string is unique. Given a constant $k$, the {\sc $k$-Distinctness} problem is to decide whether the input string contains some character at least $k$ times. 
 For {\sc $k$-Sum}, we assume that $\Sigma$ is an additive group and the problem is to decide if there exist $k$ numbers among $N$ that sum up to a prescribed number.

\paragraph{Recent adversary bounds.}

Due to the {\em certificate complexity barrier} \cite{zhang:advPower,spalek:advEquivalent},
 the positive weight adversary method fails to give a better lower bound for {\sc Element Distinctness} than $\Omega(N^{1/2})$. And similarly, due to the {\em property testing barrier} \cite{hoyer:advNegative}, it fails to give a better lower bound for {\sc Collision} than the trivial $\Omega(1)$.
Recently, Belovs gave an $\Omega(N^{2/3})$ general adversary bound for {\sc Element Distinctness} with a large $\Omega(N^2)$ alphabet size \cite{belovs:adv-el-dist}. In a series of works that followed, tight general adversary bounds were given for the {\sc $k$-Sum} \cite{spalek:kSumLower}, {\sc Certificate-Sum} \cite{belovs:nonAdaptiveLG}, and {\sc Collision} and {\sc Set Equality} problems \cite{rosmanis:collision}, all of them requiring that the alphabet size is large.
$\Omega(N^{k/(k+1)})$ and $\Omega(N^{1/3})$ lower bounds for {\sc $k$-Sum} and {\sc Set Equality}, respectively, were improvements over the best previously known lower bounds. (The $\Omega(N^{1/3})$ lower bound for {\sc Set Equality} was also independently proven by Zhandry \cite{zhandry:set-equal}; he used a completely different method, which did not require any assumptions on the alphabet size.)

The adversary lower bound for a problem is given via the {\em adversary matrix} (\refsec{adv}). The construction of the adversary matrix in all these recent (general) adversary bounds mentioned 
 has one idea in common: the adversary matrix is extracted from a larger matrix that has been constructed using, essentially, the Hamming association scheme \cite{godsil:assoc1}. The fact that we initially embed the adversary matrix in this larger matrix is the reason behind the requirement of the large alphabet size. More precisely, due to the birthday paradox, these adversary bounds require the alphabet $\Sigma$ to be large enough so that a randomly chosen string in $\Sigma^N$ with constant probability is a negative input of the problem.

Also, for these problems, all the negative inputs are essentially equally hard. 
 However, for {\sc $k$-Distinctness}, for example, the hardest negative inputs seem to be the ones in which each character appears $k-1$ times, and a randomly chosen negative input for {\sc $k$-Distinctness} is such only with a minuscule probability.
This might be a reason why an $\Omega(N^{2/3})$ adversary bound for {\sc $k$-Distinctness} \cite{spalek:adv-array} based on the idea of the embedding does not narrow the gap to the best known upper bound, $\O(N^{1-2^{k-2}/(2^k-1)})$ \cite{belovs:learningKDist}.
(The $\Omega(N^{2/3})$ lower bound was already known previously via the reduction from {\sc Element Distinctness} attributed to Aaronson in \cite{ambainis:distinctness}.)

\paragraph{Motivation for our work.}

In this paper we construct an explicit adversary matrix for {\sc Element Distinctness} with the alphabet size $|\Sigma|=N$ (and, thus, any alphabet size) yielding the tight $\Omega(N^{2/3})$ lower bound. We also provide certain ``tight'' conditions that every optimal adversary matrix for {\sc Element Distinctness} must satisfy,\footnote{%
Assuming, without loss of generality, that the adversary matrix has the symmetry given by the automorphism principle.
}
 therefore suggesting that every optimal adversary matrix for {\sc Element Distinctness} might have to be, in some sense, close to the adversary matrix that we have constructed.

The tight $\Omega(N^{k/(k+1)})$ adversary bound for {\sc $k$-Sum} by Belovs and \v{S}palek \cite{spalek:kSumLower} is an extension of Belovs' $\Omega(N^{2/3})$ adversary bound for {\sc Element Distinctness} \cite{belovs:adv-el-dist}, and it requires $|\Sigma|\in\Omega(N^k)$. We construct the adversary matrix for {\sc Element Distinctness} directly, without the embedding, therefore we do not require the condition $|\Sigma|\in\Omega(N^2)$ as in Belovs' adversary bound. 
We hope that this might help to reduce the required alphabet size in the $\Omega(N^{k/(k+1)})$ lower bound for {\sc $k$-Sum}.

As we mentioned before, an adversary matrix for {\sc $k$-Distinctness} based on the idea of the embedding might not be able to give tight lower bounds. On the other hand, in our construction we only assume that the adversary matrix is invariant under all index and all alphabet permutations, and that is something we can always do without loss of generality due to the {\em automorphism principle} \cite{hoyer:advNegative}---for {\sc Element Distinctness}, {\sc $k$-Distinctness}, and many other problems. 
 Hence, due to the optimality of the general adversary method, we know that one can construct a tight adversary bound for {\sc $k$-Distinctness} that satisfies these symmetries, and we hope
that our construction for {\sc Element Distinctness} might give insights in how to do that.

\paragraph{Structure of the paper.}

This paper is structured as follows. In \refsec{pre} we present the preliminaries of our work, including the adversary method, the automorphism principle, and the basics of the representation theory of the symmetric group. In \refsec{GammaBlocks} we show that the adversary matrix $\Gamma$ can be expressed as a linear combination of specific matrices. In this section we also present \refclm{GammaTransp}, which states what conditions every optimal adversary matrix for {\sc Element Distinctness} must satisfy; we prove this claim in the appendix. In \refsec{GammaViaGamma12} we show how to specify the adversary matrix $\Gamma$ via it submatrix $\Gamma_{1,2}$, which will make the analysis of the adversary matrix simpler. In \refsec{tools} we present tools for estimating the spectral norm of the matrix entrywise product of $\Gamma$ and the {\em difference matrix} $\Delta_i$, a quantity that is essential to the adversary method. In \refsec{sufficient} we use the conditions given by \refclm{GammaTransp} to construct an adversary matrix for {\sc Element Distinctness} with the alphabet size $N$, and we show that this matrix indeed yields the desired $\Omega(N^{2/3})$ lower bound. We conclude in \refsec{open} with open problems.

\newcommand\parA{\lambda}
\newcommand\parB{\mu}
\newcommand\parC{\nu}
\newcommand\parD{\zeta}
\newcommand\parE{\eta}
\newcommand\parF{\theta}

\newcommand\parDb{{\bar\parD}}
\newcommand\parEb{{\bar\parE}}
\newcommand\parFb{{\bar\parF}}

\newcommand\parDB{{\bar\parD}}
\newcommand\parEB{{\bar\parE}}
\newcommand\parFB{{\bar\parF}}

\newcommand\PiA{\Pi^\parA}
\newcommand\PiEb{\Pi^\parEb}

\section{Preliminaries} \label{sec:pre}

\subsection{Element distinctness problem}

Let $N$ be the length of the input and let $\Sigma$ be the input alphabet. Let $[i,N]=\{i,i+1,\ldots,N\}$ and $[N]=[1,N]$ for short.
Given an input string $z\in \Sigma^N$, the {\sc Element Distinctness} problem is to decide whether $z$ contains a collision or not, namely, weather there exist $i,j\in[N]$ such that $i\neq j$ and $z_i=z_j$. We only consider a special case of the problem where we are given a promise that the input contains at most one collision. This promise does not change the complexity of the problem \cite{ambainis:distinctness}. 

Let $\cD_1$ and $\cD_0$ be the sets of positive and negative inputs, respectively, that is, inputs with a unique collision and inputs without a collision. If $|\Sigma|<N$, then $\cD_0=\emptyset$, and the problem becomes trivial, therefore we consider the case when $|\Sigma|=N$. We have
\[
|\cD_1|=\binom{N}{2}\frac{|\Sigma|!}{(|\Sigma|-N+1)!} = \binom{N}{2}N! 
 \qquad\text{and}\qquad
 |\cD_0|=\frac{|\Sigma|!}{(|\Sigma|-N)!}=N!.
\] 

\subsection{Adversary method} \label{sec:adv}

The general adversary method gives optimal bounds for any quantum query problem.
Here we only consider the {\sc Element Distinctness} problem, so it suffices to define the adversary
 method for decision problems.
Let us think of a decision problem $p$ as a
 Boolean-valued function $p:\cD\rightarrow \{0,1\}$ with domain $\cD\subseteq\Sigma^N$, 
and let $\cD_1=p^{-1}(1)$ and $\cD_0=p^{-1}(0)$.

 An adversary matrix for a decision problem $p$ is a 
 real
 $|\cD_1|\times|\cD_0|$ matrix $\Gamma$ whose rows are labeled by the positive inputs
 $\cD_1$ and columns by the negative inputs $\cD_0$.
Let $\Gamma[\![x,y]\!]$ denote the entry of $\Gamma$ corresponding to the pair of inputs $(x,y)\in \cD_1\times \cD_0$.
 For $i\in[N]$, the difference matrices $\Delta_i$ and $\overline\Delta_i$ are the matrices of the same
 dimensions and the same row and column labeling as $\Gamma$ that are defined by
\[
 \Delta_i[\![x,y]\!] = \begin{cases}0,&\text{if }x_i=y_i,\\1,&\text{if }x_i\neq y_i,\end{cases}
\qquad\text{and}\qquad
 \overline\Delta_i[\![x,y]\!] = \begin{cases}1,&\text{if }x_i=y_i,\\0,&\text{if }x_i\neq y_i.\end{cases}
\]

\begin{thm}[Adversary bound~\cite{hoyer:advNegative,lee:stateConversion}]
\label{thm:adversary}
The quantum query complexity of the decision problem $p$ is $\Theta(\mathrm{Adv}(p))$, where $\mathrm{Adv}(p)$ is the optimal value
of the semi-definite program
\begin{equation}
\label{eqn:adv}
\begin{split}
 &{\mbox{\rm maximize }} \hspace{12.4pt} \norm|\Gamma|  \\ 
 &{\mbox{\rm subject to }}\quad \norm|\Delta_i\circ\Gamma|\leq 1\text{ for all }i\in[N],
\end{split}
\end{equation}
where the maximization is over all adversary matrices $\Gamma$ for $p$, $\norm|\cdot|$ is the spectral
 norm (i.e., the largest singular value), and $\circ$ is the entrywise matrix product.
\end{thm}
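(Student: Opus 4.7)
The plan is to establish the two directions separately. For the lower bound $Q(p)=\Omega(\mathrm{Adv}(p))$, I would use the standard progress-measure argument. Fix a quantum algorithm that solves $p$ with bounded error using $T$ queries, and let $|\psi_z^t\rangle$ denote its state after $t$ queries on input $z$. Take a feasible $\Gamma$ with unit principal singular vectors $u\in\R^{\cD_1}$ and $v\in\R^{\cD_0}$, so that $u^T\Gamma v=\|\Gamma\|$, and define the progress function
\[
 W_t = \sum_{(x,y)\in\cD_1\times\cD_0} \Gamma[\![x,y]\!]\, u_x v_y\, \langle\psi_x^t|\psi_y^t\rangle .
\]
All inputs share the same initial state, so $W_0=\|\Gamma\|$, while bounded-error correctness forces $\langle\psi_x^T|\psi_y^T\rangle$ to be close to zero for $(x,y)\in\cD_1\times\cD_0$, yielding $|W_T|=O(1)$. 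The heart of the argument is the per-step bound $|W_t-W_{t-1}|\le 2\max_i\|\Delta_i\circ\Gamma\|\le 2$, obtained by writing the $t$-th query as a controlled unitary and noting that the change in each pairwise inner product depends only on positions $i$ where $x_i\neq y_i$; a Cauchy--Schwarz estimate on a block decomposition of the query register indexed by the queried position converts the spectral-norm constraint into the per-query bound. Combining these estimates gives $T=\Omega(\|\Gamma\|)$, and optimizing over $\Gamma$ yields the lower bound.

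For the tightness direction $Q(p)=O(\mathrm{Adv}(p))$, I would invoke SDP duality and span programs. Strong duality applied to \refeqn{adv} produces a dual program whose optimum also equals $\mathrm{Adv}(p)$ and whose feasible solutions consist of vector families $\{|v_{x,i}\rangle\}$ satisfying certain orthogonality and completeness relations on pairs of inputs from $\cD_1\times\cD_0$. Any such dual-feasible solution can be reorganized into a span program that computes $p$, with both positive and negative witness sizes at most $O(\mathrm{Adv}(p))$. Reichardt's construction then converts this span program into a quantum query algorithm based on phase estimation of a quantum walk on the bipartite span-program graph, with query complexity bounded by the witness size; a constant amount of amplitude amplification boosts the success probability to the desired constant.

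The main obstacle is the tightness half. The progress-measure lower bound reduces to a compact semidefinite calculation once the right potential $W_t$ is identified, but turning an abstract optimal dual solution into an actual algorithm requires the substantial span-program machinery, together with careful spectral-gap estimates for the associated quantum walk and amplification arguments needed to convert a witness-size bound into a query algorithm with constant success probability. A second, more subtle issue is that strong duality must be justified for the (possibly unbounded) SDP in \refeqn{adv}; one handles this by approximating with bounded-entry adversary matrices and passing to the limit, which requires some care but no new ideas.
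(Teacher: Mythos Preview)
The paper does not prove this theorem at all; it is quoted as a known result from \cite{hoyer:advNegative,lee:stateConversion} and used as a black box. So there is no ``paper's own proof'' to compare against. Your outline is a faithful sketch of how the result is established in those references: the progress-measure argument for the lower bound is exactly the H{\o}yer--Lee--{\v S}palek argument, and the tightness direction via SDP duality, span programs, and Reichardt's quantum-walk evaluation is precisely the content of \cite{reichardt:advTight,lee:stateConversion}. In the context of this paper, however, no proof was expected---only the statement is needed, and any feasible $\Gamma$ in \refeqn{adv} immediately yields a lower bound.
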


Every feasible solution to the semi-definite
 program \refeqn{adv} yields a lower bound on the quantum query complexity of $p$.
Note that we can choose any adversary matrix $\Gamma$ and scale it so that the condition
 $\norm|\Delta_i\circ\Gamma|\leq 1$ holds. In practice, we use the condition
 $\norm|\Delta_i\circ\Gamma|\in \mathrm{O}(1)$ instead of $\norm|\Delta_i\circ\Gamma|\leq 1$. 
Also note that $\Delta_i\circ\Gamma = \Gamma - \overline\Delta_i\circ\Gamma$.

\subsection{Symmetries of the adversary matrix}

It is known that we can restrict the maximization in Theorem \ref{thm:adversary} to adversary matrices
 $\Gamma$ satisfying certain symmetries.
Let $\S_A$ be the symmetric group of a finite set $A$, that is, the group whose elements are all the
 permutations of elements of $A$ and whose group operation is the composition of permutations.
 The automorphism principle \cite{hoyer:advNegative} implies that, without loss of 
 generality, we can assume that $\Gamma$ for {\sc Element Distinctness} is fixed under all index and all alphabet permutations.
Namely, index permutations $\pi\in\S_{[N]}$ and alphabet permutations $\tau\in\S_\Sigma$ act on input strings $z\in\Sigma^N$ in the natural way:
\begin{alignat*}{3}
&\pi\in\S_{[N]} &\;\;:\;\;&z=(z_1,\ldots,z_N) \;\;\mapsto\;\;
z_\pi=\big(z_{\pi^{-1}(1)},\ldots,z_{\pi^{-1}(N)}\big),\\
&\tau\in\S_\Sigma&\;\;:\;\; &z=(z_1,\ldots,z_N) \;\; \mapsto \;\;
z^\tau =\big(\tau(z_1),\ldots,\tau(z_N)\big).
\end{alignat*}
The actions of $\pi$ and $\tau$ commute: we have $(z_\pi)^\tau=(z^\tau)_\pi$, which we denote by $z^\tau_\pi$ for short.
The automorphism principle implies that we can assume
\begin{equation}\label{eqn:auto1}
\Gamma[\![x,y]\!] = \Gamma[\![x^\tau_\pi,y^\tau_\pi]\!]
\end{equation}
for all $x\in \cD_1$, $y\in \cD_0$, $\pi\in \S_{[N]}$, and $\tau\in \S_\Sigma$.

Let $\cX\cong\R^{|\cD_1|}$ and $\cY\cong\R^{|\cD_0|}$ be the vector spaces
 corresponding to the positive and the negative inputs, respectively.
(We can view $\Gamma$ as a linear operator that maps $\cY$ to $\cX$.) 
Let $U_\pi^\tau$ and $V_\pi^\tau$ be the permutation matrices that respectively act on the spaces $\cX$ and $\cY$ and that map every $x\in \cD_1$ to
$x_\pi^\tau$ and every $y\in \cD_0$ to $y_\pi^\tau$.
Then (\ref{eqn:auto1}) is equivalent to 
\begin{equation}\label{eqn:auto2}
   U_\pi^\tau\Gamma=\Gamma V_\pi^\tau
\end{equation}
for all $\pi\in \S_{[N]}$, and $\tau\in \S_\Sigma$.
Both $U$ and $V$ are representations of $\S_{[N]}\times\S_\Sigma$.

\subsection{Representation theory of the symmetric group} \label{sec:rep}

Let us present the basics of the representation theory of the symmetric group. 
(For a detailed study of the representation theory of the symmetric group, refer to~\cite{james:symmetric,sagan:symmetric};
for the fundamentals of the representation theory of finite groups, refer to~\cite{serre:representation}.)

Up to isomorphism, there is one-to-one correspondence between the {\em irreps} (i.e., irreducible representations) of $\S_A$ and $|A|$-box {\em Young diagrams}, and we often use these two terms interchangeably.
We use
 $\parD$, $\parE$, and $\parF$ to denote Young diagrams having $o(N)$ boxes, $\parA$, $\parB$, and $\parC$ to denote Young diagrams having $N$, $N-1$, and $N-2$ boxes, respectively, and $\rho$ and $\sigma$ for general statements and other purposes.

 Let $\rho\vdash M$ denote that $\rho$ is an $M$-box Young diagram.
For a Young diagram $\rho$, 
 let $\rho(i)$ and $\rho\!^\top\!(j)$ denote the number of boxes in the $i$-th row and  $j$-th column of $\rho$, respectively. 
We write $\rho=(\rho(1),\rho(2),\ldots,\rho(r))$, where $r=\rho\!^\top\!(1)$ is the number of rows in $\rho$, and, given $M\ge \rho(1)$, let $(M,\rho)$ be short for $(M,\rho(1),\rho(2),\ldots,\rho(r))$.

We say that a box $(i,j)$ is present in $\rho$ and write $(i,j)\in \rho$ if $\rho(i)\ge j$ (equivalently, $\rho\!^\top\!(j)\ge i$). 
The {\em hook-length} $h_\rho(b)$ of a box $b$ is the sum of the number of boxes on the right from $b$ in the same row (i.e., $\rho(i)-j$) and the number of boxes below $b$ in the same column (i.e., $\rho\!^\top\!(j)-i$) plus one (i.e., the box $b$ itself). 
 The dimension of the irrep corresponding to $\rho$ is given by the {\em hook-length formula}:
\begin{equation}
\label{eqn:hook}
\dim\rho=|\rho|!\big/h(\rho),
\qquad\text{where}\qquad
h(\rho)= \prod\nolimits_{(i,j)\in\rho} h_\rho(i,j)
\end{equation}
and $|\rho|$ is the number of boxes in $\rho$.

Let $\sigma<\rho$ and $\sigma\ll\rho$ denote that a Young diagram $\sigma$ is obtained from $\rho$ by removing exactly one box and exactly two boxes, respectively. 
Given $\sigma\ll\rho$, let us write $\sigma\ll_r\rho$ or $\sigma\ll_c\rho$ if the two boxes removed from $\rho$ to obtain $\sigma$ are, respectively, in different rows or different columns.
Let $\sigma\ll_{rc}\rho$ be short for $(\sigma\ll_r\rho)\&(\sigma\ll_c\rho)$.
The {\em distance} 
 between two boxes $b=(i,j)$ and $b'=(i',j')$ is defined as
\(
|i'-i|+|j-j'|
\).
Given $\sigma\ll_{rc}\rho$, let $d_{\rho,\sigma}\geq 2$ be the distance between the two boxes that we remove from $\rho$ to obtain $\sigma$.

 The {\em branching rule} states that the restriction of an irrep $\rho$ of $\S_{A}$ to $\S_{A\setminus\{a\}}$, where $a\in A$, is
\[
\Res\nolimits_{\,\S_{A\setminus\{a\}}}^{\,\S_A} \!\rho \cong \bigoplus\nolimits_{\sigma<\rho}\sigma.
\]
The more general {\em Littlewood--Richardson rule} implies that the restriction of an irrep $\rho$ of $\S_{A}$ to $\S_{\{a,b\}}\times\S_{A\setminus\{a,b\}}$, where $a,b\in A$,
is
\[
\Res\nolimits_{\,\S_{\{a,b\}}\times\S_{A\setminus\{a,b\}}}^{\,\S_A} \!\rho \cong
\bigoplus\nolimits_{\sigma\ll_c\rho}(id\times\sigma)
\oplus 
\bigoplus\nolimits_{\sigma'\ll_r\rho}(sgn\times\sigma'),
\]
where $id=(2)$ and $sgn=(1,1)$ are the trivial and the sign representation of $\S_{\{a,b\}}$, respectively.
{\em{}Frobenius reciprocity} then tells us that the ``opposite'' happens when we induce an irrep of
 $\S_{A\setminus\{a\}}$ or $\S_{\{a,b\}}\times\S_{A\setminus\{a,b\}}$ to $\S_A$.
 
Given $\ell\in\{0,1,2,3\}$, a set $A=[N]$ or $A=\Sigma$, its subset $A\setminus\{a_1,\ldots,a_\ell\}$, and $\rho\vdash N-\ell$, let us write $\rho_{a_1\!\ldots a_\ell}$ if we want to stress that we think of $\rho$ as an irrep of $\S_{A\setminus\{a_1,\ldots,a_\ell\}}$. We omit the subscript if $\ell=0$ or when $\{a_1,\ldots,a_\ell\}$ is clear from the context. To lighten the notations, given $k\in o(N)$ and $\parE\vdash k$, let $\parEb_{a_1\!\ldots a_\ell}=(N-\ell-k,\parE)_{a_1\!\ldots a_\ell}\vdash N-\ell$;
here we omit the subscript if and only if $\ell=0$.

\subsection{Transporters}

Suppose we are given a group $G$, and let $\xi_1$ and $\xi_2$ be two isomorphic irreps of $G$ acting on spaces $\cZ_1$ and $\cZ_2$, respectively. 
Up to a global phase (i.e., a scalar of absolute value $1$), there exists a unique isomorphism $T_{2\leftarrow 1}$ from $\xi_1$ to $\xi_2$ that satisfies $\|T_{2\leftarrow 1}\|=1$. We call this 
isomorphism a {\em transporter} from $\xi_1$ to $\xi_2$ (or, from $\cZ_1$ to $\cZ_2$).

In this paper we only consider unitary representations and real vector spaces, therefore all singular values of $T_{2\leftarrow 1}$ are equal to $1$ and, for the global phase, we have to choose only between $\pm 1$. We always choose the global phases so that they respect inversion and composition, namely, so that $T_{1\leftarrow 2}T_{2\leftarrow 1}$ is the identity matrix on $\cZ_1$ and $T_{3\leftarrow 2}T_{2\leftarrow 1}=T_{3\leftarrow 1}$, where $\xi_3$ is an irrep isomorphic to $\xi_1$ and $\xi_2$.

\section{\texorpdfstring{Building blocks of $\Gamma$}{Building blocks of the adversary matrix}} \label{sec:GammaBlocks}

\subsection{Decomposition of $U$ and $V$ into irreps}

Without loss of generality, let us assume that the adversary matrix $\Gamma$ for the {\sc Element Distinctness} problem satisfy the symmetry \refeqn{auto2} given by the automorphism principle.
Both $U$ and $V$ are representations of $\S_{[N]}\times \S_\Sigma$ and, due to Schur's lemma, we want to see what irreps of $\S_{[N]}\times \S_\Sigma$ occur in both $U$ and $V$.
It is also convenient to consider $U$ and $V$ as representations of just $\S_{[N]}$ or just $\S_\Sigma$. 
\begin{clm} \label{clm:VDecomp}
$V$ decomposes into irreps of $\S_{[N]}\times \S_\Sigma$ as
\(
V\cong\bigoplus\nolimits_{\parA\vdash N}\parA\times\parA.
\)
\end{clm}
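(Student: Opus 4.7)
The plan is to recognize $V$ as a well-known representation in disguise, namely the two-sided regular representation of $\S_N$, and then apply the Peter--Weyl-style decomposition via Frobenius reciprocity.

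First, I would use the hypothesis $|\Sigma|=N$ to identify $\cD_0$ with the set of bijections from $[N]$ to $\Sigma$: indeed, a string $y\in\Sigma^N$ is collision-free exactly when the map $i\mapsto y_i$ is injective, which under $|\Sigma|=N$ is the same as being bijective. Fix once and for all a reference bijection $y_0\in\cD_0$, and identify $\S_\Sigma$ with $\S_{[N]}$ via $y_0$. Under this identification, $\cD_0$ becomes $\S_N$, and the two commuting actions $y\mapsto y_\pi$ and $y\mapsto y^\tau$ become the right and left regular actions respectively. Consequently the action of $\S_{[N]}\times\S_\Sigma$ on $\cD_0$ is the standard two-sided action of $\S_N\times\S_N$ on $\S_N$ by $(\pi,\tau)\cdot y=\tau y\pi^{-1}$; in particular it is transitive.

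Next, I would compute the stabilizer $H$ of $y_0$ in $\S_{[N]}\times\S_\Sigma$. The equation $\tau y_0\pi^{-1}=y_0$ forces $\tau=y_0\pi y_0^{-1}$, so $H$ is the graph of the isomorphism $\S_{[N]}\to\S_\Sigma$ induced by $y_0$; concretely, $H$ is a diagonally embedded copy of $\S_N$. By the orbit--stabilizer correspondence for permutation representations,
\[
V\;\cong\;\Ind_{H}^{\,\S_{[N]}\times\S_\Sigma}\mathbf{1},
\]
where $\mathbf{1}$ is the trivial representation of $H$.

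To extract the irreducible decomposition, I would apply Frobenius reciprocity: for each pair of irreps $\parA\vdash N$ of $\S_{[N]}$ and $\parB\vdash N$ of $\S_\Sigma$, the multiplicity of $\parA\times\parB$ in $V$ equals
\[
\dim\mathrm{Hom}_{H}\!\bigl(\mathbf{1},\,\Res_{H}(\parA\times\parB)\bigr)
\;=\;\dim\mathrm{Hom}_{\S_N}(\mathbf{1},\parA\otimes\parB)
\;=\;\dim\mathrm{Hom}_{\S_N}(\parB^{*},\parA),
\]
where in the middle step I use that $H\cong\S_N$ acts on $\parA\times\parB$ as $\parA\otimes\parB$. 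Since all irreps of $\S_N$ are real (hence self-dual), $\parB^{*}\cong\parB$, and Schur's lemma yields multiplicity $\delta_{\parA,\parB}$. Summing over $\parA$ gives exactly $V\cong\bigoplus_{\parA\vdash N}\parA\times\parA$, as claimed.

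The only point requiring care is the bookkeeping around the diagonal embedding of $H$: one must verify that the restriction of an external tensor product $\parA\times\parB$ of $\S_{[N]}\times\S_\Sigma$-irreps to this diagonal copy really is the internal tensor product $\parA\otimes\parB$, rather than some twist. Once the identification of $\S_\Sigma$ with $\S_{[N]}$ via $y_0$ is made explicit at the start, this is automatic, and the rest of the argument is standard character theory.
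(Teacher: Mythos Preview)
Your proof is correct and takes a somewhat different route from the paper's. Both arguments begin by recognizing that $\cD_0$ carries the two-sided regular action of $\S_N$, but they diverge from there. The paper observes that for each $y$ and each $\pi$ there is a unique $\tau$ (in the same conjugacy class) with $y_\pi=y^\tau$, concludes that the $\parA$-isotypical subspace of $\cY$ is the same whether computed for $\S_{[N]}$ or for $\S_\Sigma$, and then a dimension count ($(\dim\parA)^2$ from the regular representation) forces that subspace to be a single copy of $\parA\times\parA$. You instead identify $V$ as the permutation representation induced from the trivial representation of the diagonal subgroup and invoke Frobenius reciprocity together with self-duality of $\S_N$-irreps to compute all multiplicities at once. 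Your approach is the cleaner textbook argument and makes the vanishing of the off-diagonal multiplicities $\parA\times\parB$ with $\parA\neq\parB$ explicit; the paper's is shorter and stays closer to the concrete structure of $\cD_0$, which is convenient since the same ``matching index and alphabet permutations'' observation is reused later for $\cX_{i,j}^{s,t}$. Either way, the only genuine content is the Peter--Weyl decomposition of the group algebra, and both proofs amount to that.
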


\begin{proof}
As a representation of $\S_{[N]}$ and $\S_\Sigma$, respectively, $V$ is isomorphic to the regular representation of $\S_{[N]}$ and $\S_\Sigma$. For every $y\in\cD_0$ and every $\pi\in\S_{[N]}$, there is a unique $\tau\in\S_\Sigma$ such that $y_\pi=y^\tau$, and $\pi$ and $\tau$ belong to isomorphic conjugacy classes. Thus, for every $\parA$, the {\em isotypical subspace} of $\cY$ corresponding to $\parA$ (i.e., the subspace corresponding to all irreps isomorphic to $\parA$) is the same for both $\S_{[N]}$ and $\S_\Sigma$ \cite[Section 2.6]{serre:representation}.
Since $V$ is isomorphic to the regular representation, the dimension of this subspace is $(\dim\parA)^2$, which is exactly the dimension of the irrep $\parA\times\parA$ of $\S_{[N]}\times \S_\Sigma$.
\end{proof}

Now let us address $U$, which acts on the space $\cX$ corresponding to the positive inputs $x\in\cD_1$.
Let us decompose $\cD_1$ as a disjoint union of $\binom{N}{2}$ sets $\cD_{i,j}$, where $\{i,j\}\subset[N]$ and $\cD_{i,j}$ is the set of all $x\in\cD_1$ such that $x_i=x_j$.
 Let us further decompose $\cD_{i,j}$ as a disjoint union of $\binom{N}{2}$ sets $\cD_{i,j}^{s,t}$, where $\{s,t\}\subset\Sigma$ and $\cD_{i,j}^{s,t}$ is the set of all $x\in\cD_{i,j}$ that does not contain $s$ and contains $t$ twice or vice versa. Let $\cX_{i,j}$ and $\cX_{i,j}^{s,t}$ be the subspaces of $\cX$ that correspond to the sets $\cD_{i,j}$ and $\cD_{i,j}^{s,t}$, respectively.
The space $\cX_{i,j}^{s,t}$
is invariant under the action of the subgroup $\S_{i,j}^{s,t}$ defined as
\[
\S_{i,j}^{s,t}=
(\S_{\{i,j\}}\times\S_{[N]\setminus\{i,j\}})
\times
(\S_{\{s,t\}}\times\S_{\Sigma\setminus\{s,t\}}),
\]
namely, $U_\pi^\tau\cX_{i,j}^{s,t}=\cX_{i,j}^{s,t}$ for all $(\pi,\tau)\in \S_{i,j}^{s,t}$.
Therefore $U$ restricted to the subspace $\cX_{i,j}^{s,t}$ is a representation of $\S_{i,j}^{s,t}$, and, similarly to Claim \ref{clm:VDecomp}, it decomposes into irreps as
\begin{equation} \label{eqn:SijstIrrep}
\bigoplus\nolimits_{\parC\vdash N-2} \big(id\times\parC\big)\times\big((id\oplus sgn)\times\parC\big).
\end{equation}
To see how $U$ decomposes into irreps of $\S_{[N]}\times\S_\Sigma$, we induce  the representation (\ref{eqn:SijstIrrep}) from $\S_{i,j}^{s,t}$ to $\S_{[N]}\times\S_\Sigma$.

The Littlewood--Richardson rule implies that an irrep of $\S_{[N]}\times\S_\Sigma$ isomorphic to $\parA\times\parA$ 
can occur in $U$ due to one of the following scenarios.
\begin{itemize}
\item If $\parC\ll_c\parA$ and $\parC\not\!\ll_r\parA$  (i.e., $\parC$ is obtained from $\parA$ by removing two boxes in the same row), then $\parA\times\parA$ occurs once in the induction of $(id\times\parC)\times(id\times\parC)$. Let $\cX^\parA_{id,\parC}$ denote the subspace of $\cX$ corresponding to this instance of $\parA\times\parA$.
\item If $\parC\ll_{rc}\parA$, then $\parA\times\parA$ occurs once in the induction of $(id\times\parC)\times(id\times\parC)$ and once in the induction of $(id\times\parC)\times(sgn\times\parC)$. Let $\cX^\parA_{id,\parC}$ and $\cX^\parA_{sgn,\parC}$ denote the respective subspaces of $\cX$ corresponding to these instances of $\parA\times\parA$.
\end{itemize}
Note: the subspaces $\cX^\parA_{id,\parC}$ and $\cX^\parA_{sgn,\parC}$ are independent from the choice of $\{i,j\}\subset[N]$ and $\{s,t\}\subset\Sigma$.

\subsection{\texorpdfstring{$\Gamma$ as a linear combination of transporters}{Adversary matrix $\Gamma$ as a linear combination of transporters}}

Let $\Xi^\parA_{id,\parC}$ and $\Xi^\parA_{sgn,\parC}$ denote the transporters from the unique instance of $\parA\times\parA$ in $\cY$ to the subspaces $\cX^\parA_{id,\parC}$ and $\cX^\parA_{sgn,\parC}$, respectively. We will specify the global phases of these transporters in \refsec{Gamma12ViaProj}.
 We consider $\Xi^\parA_{id,\parC}$ and $\Xi^\parA_{sgn,\parC}$ as matrices of dimensions
 $\binom{N}{2}N!\times N!$ and rank $(\dim\parA)^2$.
Schur's lemma implies that, due to (\ref{eqn:auto2}), we can express $\Gamma$ as a linear combination of these transporters. Namely,
\begin{equation}\label{eqn:GammaBlocks}
\Gamma = \sum_{\parA\vdash N}
\Big(
\sum_{\parC\ll_c \parA}     \beta^\parA_{id,\parC}\Xi^\parA_{id,\parC} + 
\sum_{\parC\ll_{rc} \parA}\beta^\parA_{sgn,\parC}\Xi^\parA_{sgn,\parC}
\Big),
\end{equation}
where the coefficients $\beta^\parA_{id,\parC}$ and $\beta^\parA_{sgn,\parC}$ are real.

\newcommand{\ch}{{\checkmark}}
\newcommand{\chch}{{\checkmark\!\!\!\checkmark}}
\newcommand{\rch}{{\ch$\!_0$}}
\newcommand{\bchch}{{\chch$\!_1$}}
\newcommand{\gch}{{\ch$\!_2$}}
\newcommand{\gchch}{{\chch$\!_2$}}

Thus we have reduced the construction of the adversary matrix $\Gamma$ to choosing the coefficients $\beta$ of the transporters in (\ref{eqn:GammaBlocks}). 
To illustrate what are the available transporters, let us consider the last four $(N-2)$-box Young diagrams $\parC$ of the {\em lexicographical order}---$(N-2)$, $(N-3,1)$, $(N-4,2)$, and $(N-4,1,1)$---and all $\parA$ that are obtained from these $\parC$ by adding two boxes in different columns. Table \ref{table} shows pairs of $\parA$ and $\parC$ for which we have both $\Xi^\parA_{id,\parC}$ and $\Xi^\parA_{sgn,\parC}$ available for the construction of $\Gamma$ (double check mark ``\chch'') or just $\Xi^\parA_{id,\parC}$ available (single check mark ``\ch'').

\begin{table}[here]
\centering
\begin{tabular}{l|ccccc}
\hspace{23pt}\backslashbox{$\parA$}{$\parC$}\hspace{-10pt} & $(N\!-\!2)$ & $(N\!-\!3,1)$ & $(N\!-\!4,2)$ & $(N\!-\!4,1,1)$ 
\\
\hline
$(N)$ &  \rch\\
$(N\!-\!1,1)$ & \bchch & \rch \\
$(N\!-\!2,2)$ & \gch & \bchch & \rch \\
$(N\!-\!2,1,1)$ & & \bchch & & \rch \\
$(N\!-\!3,3)$ & & \gch & \bchch &  \\
$(N\!-\!3,2,1)$ & & \gchch & \bchch & \bchch \\
$(N\!-\!3,1,1,1)$ & & & & \bchch \\
$(N\!-\!4,4)$ & & &\gch &  \\
$(N\!-\!4,3,1)$ & & &\gchch & \gch \\
$(N\!-\!4,2,2)$ & & & \gch & \\
$(N\!-\!4,2,1,1)$ & & & & \gchch 
\end{tabular}
\caption{Available operators for the construction of $\Gamma$. We distinguish three cases: both $\parA$ and $\parC$ are the same below the first row (label ``\rch''), $\parA$ has one box more below the first row than $\parC$ (label ``\bchch''), $\parA$ has two boxes more below the first row than $\parC$ (labels ``\gch'' and ``\gchch'').}
\label{table}
\end{table}

Due to the symmetry, $\|\Delta_i\circ\Gamma\|$ is the same for all $i\in[N]$, so, from now on, let us only consider $\Delta_1\circ\Gamma$. We want to choose the coefficients $\beta$ so that $\|\Gamma\|\in\Omega(N^{2/3})$ and $\|\Delta_1\circ\Gamma\|\in\O(1)$. The automorphism principle also implies (see \cite{hoyer:advNegative}) that we can assume that the principal left and right singular vectors of $\Gamma$ are the all-ones vectors, which correspond to $\Xi^{(N)}_{id,(N-2)}$. We thus choose $\beta^{(N)}_{id,(N-2)}\in\Theta(N^{2/3})$. 

In order to understand how to choose the coefficients $\beta$, in \refapp{necessary} we prove the following claim, which relates all the coefficients of transporters of Table \ref{table} and more.

\begin{clm} \label{clm:GammaTransp}
Suppose $\Gamma$ is given as in (\ref{eqn:GammaBlocks}) and $\beta^{(N)}_{id,(N-2)}=N^{2/3}$. 
Consider $\parA\vdash N$ that has $\O(1)$ boxes below the first row and $\parC\ll_c\parA$.
In order for $\|\Delta_1\circ\Gamma\|\in\O(1)$ to hold, 
 we need to have 
\begin{enumerate}
\item $\beta^\parA_{id,\parC}=N^{2/3}+\O(1)$ if $\parA$ and $\parC$ are the same below the first row,
\item $\beta^\parA_{id,\parC},\beta^\parA_{sgn,\parC}=c^\parA_\parC N^{1/6}+\O(1)$ if $\parA$ has one box more below the first row than $\parC$,
where $c^\parA_\parC$ is a constant depending only on the part of $\parA$ and $\parC$ below the first row,\footnote{Let $\hat\parA$ and $\hat\parC$ be the part of $\parA$ and $\parC$ below the first row, respectively. Then $c^\parA_\parC=\sqrt{h(\hat\parA)/h(\hat\parC)}=\sqrt{N\dim\parC/\dim\parA}+\O(1/N)$.}
\item $\beta^\parA_{id,\parC},\beta^\parA_{sgn,\parC}=\O(1)$ if $\parA$ has two boxes more below the first row than $\parC$.
\end{enumerate}
\end{clm}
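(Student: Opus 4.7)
The plan is to reduce the condition $\|\Delta_1\circ\Gamma\|\in\O(1)$ to a collection of block-diagonal conditions by exploiting the subgroup $\S_{[2,N]}\times\S_\Sigma$ that stabilizes the first index. Writing $\Delta_1\circ\Gamma=\Gamma-\overline\Delta_1\circ\Gamma$, both summands commute with this subgroup, so by Schur's lemma each is block-diagonal with respect to the isotypic decomposition of $\cX$ and $\cY$ under it. Using the branching rule, each irrep $\parA\times\parA$ appearing in $V$ decomposes as $\bigoplus_{\parB<\parA}\parB\times\parA$, and similarly for $U$ with further refinement through $\parC$. The spectral-norm condition therefore reduces to the requirement that the norm of every block of $\Delta_1\circ\Gamma$ is $\O(1)$, which is exactly the kind of constraint that will force relations among the $\beta$'s.

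First I would expand each transporter $\Xi^\parA_{id,\parC}$ and $\Xi^\parA_{sgn,\parC}$ as a linear combination of transporters of the subgroup, with coefficients obtainable from the branching and Littlewood--Richardson rules (ultimately expressible through ratios of hook-length products). I would then compute the corresponding block decomposition of $\overline\Delta_1\circ\Gamma$: since its entries are nonzero only when $x_1=y_1$, averaging over $\S_\Sigma$ converts it into a sum of operators acting between subspaces of fixed matching first coordinate, which carry the enhanced $\S_{[2,N]}\times\S_{\Sigma\setminus\{s\}}$ symmetry. Each $\beta^\parA_{id,\parC}$ and $\beta^\parA_{sgn,\parC}$ thus contributes, through branching coefficients, a computable amount to the $(\parB,\parA,\parC)$-block of $\Gamma$ and of $\overline\Delta_1\circ\Gamma$; imposing that they agree up to $\O(1)$ yields a triangular system of linear equations on the $\beta$'s, to be solved inductively on the number of boxes of $\parA$ and $\parC$ below the first row.

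From this system the three cases follow directly. In case~1, the block equations force $\beta^\parA_{id,\parC}=\beta^{(N)}_{id,(N-2)}+\O(1)=N^{2/3}+\O(1)$, as nothing new is introduced below the first row. In case~2, one disagreement box couples $\beta^\parA_{id,\parC}$ and $\beta^\parA_{sgn,\parC}$ to the case-1 coefficients through the branching ratio $\sqrt{h(\hat\parA)/h(\hat\parC)}$; combined with $\beta^{(N)}_{id,(N-2)}=N^{2/3}$ and the estimate $\dim\parA/\dim\parC=\Theta(N)$ (from the hook-length formula), this yields the leading term $c^\parA_\parC N^{1/6}$. In case~3 the coupling to $\beta^{(N)}_{id,(N-2)}$ vanishes, leaving $\beta^\parA_{id,\parC},\beta^\parA_{sgn,\parC}=\O(1)$.

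The main obstacle will be the careful bookkeeping of symmetric-group branching coefficients and, especially, the choice of global phases of the transporters (deferred to \refsec{Gamma12ViaProj}); these phases determine whether $\beta^\parA_{id,\parC}$ and $\beta^\parA_{sgn,\parC}$ in case~2 are forced to share the same leading constant $c^\parA_\parC$ or come out with opposite signs, and getting this right is what pins down the $N^{1/6}$ prefactor rather than merely bounding its absolute value.
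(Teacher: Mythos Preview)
Your high-level strategy---restrict to the stabiliser $\S_{[2,N]}\times\S_\Sigma$ of the first index, decompose both sides into isotypic blocks, and read off constraints on the $\beta$'s---is the same as the paper's. But there is a real gap in how you handle Point~1, and it is exactly the place where the paper has to work hardest.

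When you write ``similarly for $U$ with further refinement through $\parC$'' and later index the blocks by $(\parB,\parA,\parC)$, you are implicitly describing only the part of $\cX$ in which the collision pair contains the index~$1$ (what the paper calls $\Gamma'$). The rows of $\Gamma$ where the collision does \emph{not} involve index~$1$ (the paper's $\Gamma''$) decompose under $\S_{[2,N]}\times\S_\Sigma$ in a different way, and it is precisely the $\Gamma''$ constraint that drives the induction in Point~1. Concretely: from the $\Gamma'$ blocks alone one only relates $\beta^{\parEB}_{id,\parEB_{12}}$ to the Point-2 coefficients $\beta^{\parDB}_{\cdot,\parEB_{12}}$ with $\parD>\parE$, i.e.\ the induction on the number of boxes below the first row runs the wrong direction. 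The paper closes the loop via \reflem{helper4}, which compares $\alpha^{\parEB}_{id,\parEB_{12}}$ with $\alpha^{\parFB}_{id,\parFB_{12}}$ for $\parF<\parE$; this lemma comes from $\|\Delta_1\circ\Gamma''\|\le 1$, is obtained by passing to $\Delta_3\circ\Gamma_{1,2}$, and requires restricting further to $\S_{[4,N]}$ and computing the overlap \refeqn{overlap}. Your sentence ``nothing new is introduced below the first row'' does not substitute for this step.

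So: your outline for Points~2 and~3 is essentially what the paper does (Corollaries~\ref{cor:helper2} and~\ref{cor:helper3}), but for Point~1 you must explicitly isolate the $\Gamma''$ contribution and carry out the $\S_{[4,N]}$-level computation; without it the ``triangular system'' you describe is not actually triangular in the direction you need.
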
 

Note that we always have the freedom of changing (a constant number of) coefficients $\beta$ 
up to an additive term of $\O(1)$ because of the fact that
\begin{equation} \label{eqn:gamma2}
\gamma_2(\Delta_1)=\max_B\big\{\,\|\Delta_1\circ B\|\,:\,\|B\|\leq1\,\big\} \leq 2
\end{equation}
(see \cite{hoyer:advNegative} for this and other facts about the $\gamma_2$ norm).
We will use this fact again in \refsec{sufficient}.

\section{\texorpdfstring{Specification of $\Gamma$ via $\Gamma_{1,2}$}{Specification of the adversary matrix via its restriction}} \label{sec:GammaViaGamma12}

Due to the symmetry (\ref{eqn:auto1}), it suffices to specify a single row of the adversary matrix $\Gamma$ in order to specify the whole matrix. For the convenience, let us instead specify $\Gamma$ via specifying its $(N!\times N!)$-dimensional submatrix $\Gamma_{1,2}$---for $\{i,j\}\subset[N]$, we define $\Gamma_{i,j}$ to be the submatrix of $\Gamma$ that corresponds to the rows labeled by $x\in\cD_{i,j}$, that is, positive inputs $x$ with $x_i=x_j$.
We think of $\Gamma_{i,j}$ both as an $N!\times N!$ square matrix and as a matrix of the same dimensions as $\Gamma$ that is obtained from $\Gamma$ by setting to zero all the $\big(\binom{N}{2}-1\big)N!$ rows that correspond to $x\notin\cD_{i,j}$.

\subsection{\texorpdfstring{Necessary and sufficient symmetries of $\Gamma_{1,2}$}{Necessary and sufficient symmetries of the restriction}}

For all
$(\pi,\tau)\in(\S_{\{1,2\}}\times\S_{[3,N]})\times\S_\Sigma$,
we have
$U_\pi^\tau\cX_{1,2}=\cX_{1,2}$ and, therefore,
$U_\pi^\tau\Gamma_{1,2}=\Gamma_{1,2} V_\pi^\tau$. 
This is the necessary and sufficient symmetry that $\Gamma_{1,2}$ must satisfy in order for $\Gamma$ to be fixed under all index and alphabet permutations.
Since $U_{(12)}\Gamma_{1,2}=\Gamma_{1,2}$, we also have $\Gamma_{1,2}V_{(12)}=\Gamma_{1,2}$. We have
\begin{equation}\label{eqn:GammaViaBlocks}
\Gamma
=\sum_{\{i,j\}\subset[N]}\Gamma_{i,j}
= \sum_{\pi\in R}
U_{\pi} \Gamma_{1,2} V_{\pi^{-1}}
=\binom{N}{2}\frac{1}{N!}\sum_{\pi\in\S_{[N]}}
U_{\pi} \Gamma_{1,2} V_{\pi^{-1}},
\end{equation}
where $R=\Rep(\S_{[N]}/(\S_{\{1,2\}}\times\S_{[3,N]}))$ is a transversal of the left cosets of $\S_{\{1,2\}}\times\S_{[3,N]}$ in $\S_{[N]}$.

 Let $f$ be a bijection between $\cD_0$ and $\cD_{1,2}$ defined as
\[
f : \cD_0\rightarrow \cD_{1,2} : (y_1,y_2,y_3,\ldots,y_N) \mapsto (y_1,y_1,y_3,\ldots,y_N),
\]
and let $F$ be the corresponding permutation matrix mapping $\cY$ to $\cX_{1,2}$.
Let us order rows and columns of $\Gamma_{1,2}$ so that they correspond to $f(y)$ and $y$, respectively, where we take $y\in\cD_0$ in the same order for both (see Figure \ref{fig:GBlock}). 
 Hence, $F$ becomes the identity matrix on $\cY$ (from this point onward, we essentially think of $\cX_{1,2}$ and $\cY$ as the same space). Let us denote this identity matrix by $\I$.

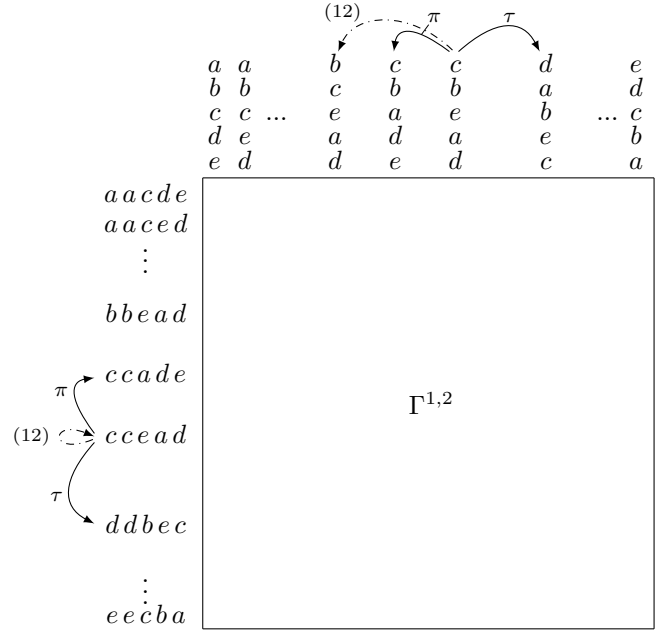
\begin{figure}[tbh]
\centering
\begin{tikzpicture}[xscale=0.4, yscale=0.4]
 \draw (1.9,0.5) -- (1.9,-14.5) ;
 \draw (1.9,0.5) -- (16.9,0.5);
 \draw (16.9,0.5) -- (16.9,-14.5) ;
 \draw (1.9,-14.5) -- (16.9,-14.5);
  \node at (9.5,-7) {$\Gamma^{1,2}$} ;
 
\node at (0,0) {$a\,a\,c\,d\,e$};
 \node at (0,-1) {$a\,a\,c\,e\,d$};
 \node at (0,-2) {$\vdots$};
 
 \node at (0,-4) {$b\,b\,e\,a\,d$};
 \node at (0,-6) {$c\,c\,a\,d\,e$};
\node at (0,-8) {$c\,c\,e\,a\,d$};
\node at (0,-11) {$d\,d\,b\,e\,c$};

\node at (0,-13) {$\vdots$};
\node at (0,-14) {$e\,e\,c\,b\,a$};

 \node[above] at (2.3,3.7) {$a$};
 \node[above] at (2.3,2.9) {$b$};
 \node[above] at (2.3,2.1) {$c$};
 \node[above] at (2.3,1.3) {$d$};
 \node[above] at (2.3,0.5) {$e$};

 \node[above] at (3.3,3.7) {$a$};
 \node[above] at (3.3,2.9) {$b$};
 \node[above] at (3.3,2.1) {$c$};
 \node[above] at (3.3,1.3) {$e$};
 \node[above] at (3.3,0.5) {$d$};

\node[above] at (4.35,2.1) {$...$};
\node[above] at (15.35,2.1) {$...$};

 \node[above] at (6.3,3.7) {$b$};
 \node[above] at (6.3,2.9) {$c$};
 \node[above] at (6.3,2.1) {$e$};
 \node[above] at (6.3,1.3) {$a$};
 \node[above] at (6.3,0.5) {$d$};

 \node[above] at (8.3,3.7) {$c$};
 \node[above] at (8.3,2.9) {$b$};
 \node[above] at (8.3,2.1) {$a$};
 \node[above] at (8.3,1.3) {$d$};
 \node[above] at (8.3,0.5) {$e$};

 \node[above] at (10.3,3.7) {$c$};
 \node[above] at (10.3,2.9) {$b$};
 \node[above] at (10.3,2.1) {$e$};
 \node[above] at (10.3,1.3) {$a$};
 \node[above] at (10.3,0.5) {$d$};

 \node[above] at (13.3,3.7) {$d$};
 \node[above] at (13.3,2.9) {$a$};
 \node[above] at (13.3,2.1) {$b$};
 \node[above] at (13.3,1.3) {$e$};
 \node[above] at (13.3,0.5) {$c$};

 \node[above] at (16.3,3.7) {$e$};
 \node[above] at (16.3,2.9) {$d$};
 \node[above] at (16.3,2.1) {$c$};
 \node[above] at (16.3,1.3) {$b$};
 \node[above] at (16.3,0.5) {$a$};


\node at (-2.8,-6.6) {\footnotesize{$\pi$}};
\node at (-3.8,-8.1) {\scriptsize{$(12)$}};
\node at (-2.95,-10.1) {\footnotesize{$\tau$}};

\node at (6.55,6.0) {\scriptsize{$(12)$}};
\node at (9.6,5.85) {\footnotesize{$\pi$}}; \draw [very thin] (9.16,5.256) -- +(0.25,0.34);
\node at (12.1,5.85) {\footnotesize{$\tau$}};

\begin{scope}[->,>=latex]
    \draw [dashdotted,->] (-1.75,-8.2) .. controls (-3.2,-8.8) and (-3.2,-7.5) .. (-1.7,-8.1);
    \draw [->] (-1.7,-8) .. controls (-2.7,-6.666) and (-2.4,-6.25) .. (-1.7,-6.15);
    \draw [->] (-1.7,-8.3) .. controls (-2.8,-9.533) and (-2.8,-10.433) .. (-1.7,-11);

    \draw [dashdotted,->] (10.2,4.75) .. controls (8.8,6.1) and (6.95,6) .. (6.4,4.7);
    \draw [] (10.1,4.7) .. controls (8.766,5.7) and (8.35,5.4) .. (8.25,4.7);
    \draw [->] (10.4,4.7) .. controls (11.633,5.8) and (12.533,5.8) .. (13.1,4.7);
\end{scope}
\end{tikzpicture}
\caption{%
Symmetries of $\Gamma_{1,2}$ for $N=5$ and $\Sigma=\{a,b,c,d,e\}$. 
With respect to the bijection $f$, the order of rows and columns matches. 
The solid arrows show that $U^\tau$ and $V^\tau$ act symmetrically on $\Gamma_{1,2}$ (here we use $\tau=(aeb)(cd)\in\S_\Sigma$),
and so do $U_\pi$ and $V_\pi$ for $\pi\in\S_{[3,N]}$ (here we use $\pi=(354)$).
However, as shown by the dash-dotted arrows, $U_{(12)}$ acts as the identity on the rows, while $V_{(12)}$ transposes the columns.}
\label{fig:GBlock}
\end{figure}

For all $(\pi,\tau)\in\S_{[3,N]}\times\S_\Sigma$ we have $f(y_\pi^\tau)=(f(y))_\pi^\tau$ and, thus, $V_\pi^\tau=F V_\pi^\tau=U_\pi^\tau F=U_\pi^\tau$, where we consider the restriction of $U_\pi^\tau$ to $\cX_{1,2}$. Note that $U_{(12)}=\I$ on $\cX_{1,2}$, while $V_{(12)}\neq\I$. Hence now the two necessary and sufficient symmetries that $\Gamma_{1,2}$ must satisfy are
\begin{equation}
\label{eqn:Gamma12Sym}
V_\pi^\tau \Gamma_{1,2} = \Gamma_{1,2} V_\pi^\tau
\quad
\text{for all}
\quad
(\pi,\tau)\in\S_{[3,N]}\times\S_\Sigma
\qquad\text{and}\qquad
\Gamma_{1,2}V_{(12)}=\Gamma_{1,2}.
\end{equation}
Figure \ref{fig:GBlock} illustrates these symmetries.

\subsection{Labeling of projectors and transporters}

We use $\Pi$, with some subscripts and superscripts, to denote operators acting on $\cY$; we use subscripts for irreps of index permutations and superscripts for irreps of alphabet permutations.  We also think of each such an operator $\Pi$ to map $\cY$ to $\cX_{1,2}$ and vice versa (technically, $F\Pi$ and $\Pi F^*$, respectively). 

Let $\Pi_{id}=(\I+V_{(12)})/2$ and $\Pi_{sgn}=(\I-V_{(12)})/2$ denote the projectors on the isotypical subspaces of $\cY$ corresponding to irreps $id=(2)$ and $sgn=(1,1)$ of $\S_{\{1,2\}}$, respectively. Let $\Pi_{\rho_{i_1\!\ldots i_\ell}}$ and $\Pi^{\sigma_{s_1\!\ldots s_m}}$ denote the projectors on the isotypical subspaces corresponding to an irrep 
$\rho$ of $\S_{[N]\setminus\{i_1,\ldots,i_\ell\}}$ and an irrep 
$\sigma$  of $\S_{\Sigma\setminus\{s_1,\ldots,s_m\}}$, respectively.
Note that $\Pi_{\rho_{i_1\!\ldots i_\ell}}$ and $\Pi^{\sigma_{s_1\!\ldots s_m}}$ commute, and let
\[
  \Pi_{\rho_{i_1\!\ldots i_\ell}}^{\sigma_{s_1\!\ldots s_m}} =
  \Pi_{\rho_{i_1\!\ldots i_\ell}} \Pi^{\sigma_{s_1\!\ldots s_m}} =
  \Pi^{\sigma_{s_1\!\ldots s_m}}\Pi_{\rho_{i_1\!\ldots i_\ell}},
\]
which is the projector on the isotypical subspace corresponding to the irrep $\rho\times\sigma$ of 
$\S_{[N]\setminus\{i_1,\ldots,i_\ell\}}\times\S_{\Sigma\setminus\{s_1,\ldots,s_m\}}$
 (note: this subspace may contain multiple instances of the irrep).
In general, when multiple such projectors 
 mutually commute, we denote their product with a single $\Pi$ whose subscript and superscript is, respectively, a concatenation of the subscripts and superscripts of these projectors. 
For example, $\Pi_{id,\parC_{12}}^\parA=\Pi_{id}^\parA\Pi_{\parC_{12}}\Pi^\parA$
(note: $\Pi^\parA$ corresponds to an irrep $\parA$ of $\S_{\Sigma\setminus\emptyset}=\S_\Sigma$).

Suppose that $\Pi^\parA_{sub}$ and $\Pi^\parA_{sub'}$ are two projectors each projecting onto a single instance of an irrep $\rho_{i_1\!\ldots i_\ell}\times \parA$ of $\S_{[N]\setminus\{i_1,\ldots,i_\ell\}}\times\S_\Sigma$, where $sub$ and $sub'$ are subscripts determining these instances.
 Then let $\Pi^\parA_{sub'\leftarrow sub}$ denote the transporter from the instance corresponding to $\Pi^\parA_{sub}$ to one corresponding to $\Pi^\parA_{sub'}$. Let
\(
\Pi^\parA_{sub'\leftrightarrow sub} = \Pi^\parA_{sub'\leftarrow sub} + \Pi^\parA_{sub\leftarrow sub'}
\)
for short.

\subsection{\texorpdfstring{Decomposition of $\Gamma_{1,2}$ into projectors and transporters}{Decomposition of the restriction into projectors and transporters}} \label{sec:Gamma12ViaProj}

Due to \refeqn{Gamma12Sym}, we can express $\Gamma_{1,2}$ as a linear combination of projectors onto irreps and transporters between isomorphic irreps of $\S_{[3,N]}\times\S_\Sigma$.
Due to \refeqn{Gamma12Sym} we also have $\Gamma_{1,2}\Pi_{id}=\Gamma_{1,2}$ and $\Gamma_{1,2}\Pi_{sgn}=0$. Claim \ref{clm:VDecomp} states that 
 $\I=\sum_{\parA\vdash N}\Pi_\parA^\parA$, and  
we have
$\Pi_\parA^\parA = \sum\nolimits_{\parC\ll\parA} \Pi^\parA_{\parC_{12}}$. 
If the two boxes removed from $\parA$ to obtain $\parC$ are in the same row or the same column,
then $\Pi^\parA_{\parC_{12}}$ projects onto the unique instance of the irrep $\parC\times\parA$ in $V$, and $\Pi^\parA_{\parC_{12}}=\Pi^\parA_{id,\parC_{12}}$
or $\Pi^\parA_{\parC_{12}}=\Pi^\parA_{sgn,\parC_{12}}$, respectively.
On the other hand,
if they are in different rows and columns,
then $\Pi^\parA_{\parC_{12}}= \Pi^\parA_{id,\parC_{12}} + \Pi^\parA_{sgn,\parC_{12}}$, where each $\Pi^\parA_{id,\parC_{12}}$ and $\Pi^\parA_{sgn,\parC_{12}}$ projects onto an instance of the irrep $\parC\times\parA$.
Hence, similarly to 
(\ref{eqn:GammaBlocks}), we can express $\Gamma_{1,2}$ as a linear combination
\begin{equation}\label{eqn:Gamma12Blocks}
\Gamma_{1,2} = \sum_{\parA\vdash N}
\Big(
\sum_{\parC\ll_c\parA}     \alpha^\parA_{id,\parC}\Pi^\parA_{id,\parC_{12}}  +
\sum_{\parC\ll_{rc}\parA} \alpha^\parA_{sgn,\parC} \Pi^\parA_{sgn,\parC_{12}\leftarrow id,\parC_{12}}
\Big).
\end{equation}

If $\parC\ll_{rc}\parA$, then there exist two distinct $\parB,\parB'\vdash N-1$ such that $\parC<\parB<\parA$ and $\parC<\parB'<\parA$, and let $\parB$ appear in the lexicographic order after $\parB'$. Note that $\PiA_{\parC_{12},\parB_1}$ projects onto a single instance of $\parC\times\parA$. 
We have \[
\PiA_{sgn,\parC_{12}\leftarrow id,\parC_{12}}
\propto
\PiA_{sgn,\parC_{12}}
\PiA_{\parC_{12},\parB_1}
\PiA_{id,\parC_{12}},
\]
and we specify the global phase of the transporter $\PiA_{sgn,\parC_{12}\leftarrow id,\parC_{12}}$ by assuming that the coefficient of this proportionality is positive. We present the value of this coefficient in \refsec{overlaps1}.

Let us relate (\ref{eqn:GammaBlocks}) and (\ref{eqn:Gamma12Blocks}), the two ways in which we can specify the adversary matrix.
 One can see that the $2(N-2)!\times N!$ submatrix of $\Xi^\parA_{id,\parC_{12}}$ and $\Xi^\parA_{sgn,\parC_{12}}$ corresponding to $\cD_{1,2}^{s,t}$ is proportional, respectively, to the $2(N-2)!\times N!$ submatrix of 
$\Pi_{id,\parC_{12}}^\parA$
and
$\Pi_{sgn,\parC_{12}\leftarrow id,\parC_{12}}^\parA$
corresponding to $\cD_{1,2}^{s,t}$.
Hence, just like in \refeqn{GammaViaBlocks}, we have
\[
 \Xi_{id,\parC}^\parA 
=
\frac{1}{\gamma_{id,\parC}^\parA} \sum_{\pi\in R} U_\pi \Pi_{id,\parC_{12}}^\parA  V_{\pi^{-1}}
\qquad
\text{and} 
\qquad
\Xi_{sgn,\parC}^\parA 
=
\frac{1}{\gamma_{sgn,\parC}^\parA} \sum_{\pi\in R} U_\pi \Pi_{sgn,\parC_{12}\leftarrow id,\parC_{12}}^\parA  V_{\pi^{-1}},
\]
and we specify the global phase of the transporters $\Xi$ by assuming that the normalization scalars $\gamma$ are positive.
Note that
\begin{multline*}
(\gamma_{id,\parC}^\parA)^2\Pi^\parA_\parA  =
(\gamma_{id,\parC}^\parA\Xi^\parA_{id,\parC})^*
(\gamma_{id,\parC}^\parA\Xi^\parA_{id,\parC}) = 
\Big(\sum_{\pi\in R} U_\pi \Pi_{id,\parC_{12}}^\parA  V_{\pi^{-1}}\Big)^*
\sum_{\pi\in R} U_\pi \Pi_{id,\parC_{12}}^\parA  V_{\pi^{-1}} \\
=
\binom{N}{2}\frac{1}{N!}\sum_{\pi\in\S_{[N]}} V_\pi \Pi_{id,\parC_{12}}^\parA  V_{\pi^{-1}}
=
\binom{N}{2} \frac{\dim\parC}{\dim\parA}\Pi^\parA_\parA,
\end{multline*}
where the last equality holds because $V_\pi$ and $\Pi^\parA$ commute (thus the sum has to be proportional to $\Pi^\parA_\parA$) and $\Tr[\Pi^\parA_{id,\parC_{12}}]\big/\Tr[\Pi^\parA_\parA]=\dim\parC/\dim\parA$. The same way we calculate $\gamma_{sgn,\parC}^\parA$, and we have
\[
\gamma_{id,\parC}^\parA=\frac{\beta_{id,\parC}^\parA}{\alpha_{id,\parC}^\parA} =
\gamma_{sgn,\parC}^\parA=\frac{\beta_{sgn,\parC}^\parA}{\alpha_{sgn,\parC}^\parA} = 
\sqrt{\binom{N}{2}\frac{\dim\parC}{\dim\parA}}.
\]

\section{\texorpdfstring{Tools for estimating $\|\Delta_1\circ\Gamma\|$}{Tools for estimating the norm of the product of the adversary matrix and the difference matrix}} \label{sec:tools}

\subsection{\texorpdfstring{Division of $\Delta_1\circ\Gamma$ into two parts}{Division of the adversary matrix into two parts}} \label{sec:TwoHalves}

For all $j\in[2,N]$, $\Delta_1\circ \Gamma_{1,j}$ is essentially the same as $\Delta_1\circ \Gamma_{1,2}$. And, for all $\{i,j\}\subset[2,N]$, $\Delta_1\circ \Gamma_{i,j}$ is essentially the same as $\Delta_1\circ \Gamma_{2,3}$, which, in turn, is essentially the same as $\Delta_3\circ \Gamma_{1,2}$. Let us distinguish these two cases by dividing $\Gamma$ into two parts: let 
$\Gamma'$ be the $(N-1)N!\times N!$ submatrix of $\Gamma$ corresponding to $x\in\cD_{1,j}$, where $j\in[2,N]$, and let $\Gamma''$ be the $\binom{N-1}{2}N!\times N!$ submatrix of $\Gamma$ corresponding to $x\in\cD_{i,j}$, where $\{i,j\}\in[2,N]$. 

\begin{clm} \label{clm:TwoHalves}
We have $\|\Delta_1\circ\Gamma\|\in\O(1)$ if and only if both $\|\Delta_1\circ\Gamma'\|\in\O(1)$ and $\|\Delta_1\circ\Gamma''\|\in\O(1)$.
\end{clm}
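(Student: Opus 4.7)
The key observation is that the rows of $\Gamma$ are partitioned naturally according to whether the collision of the positive input $x$ involves the index $1$ or not: the rows with $x\in\cD_{1,j}$ for some $j\in[2,N]$ are exactly the rows of $\Gamma'$, and the remaining rows ($x\in\cD_{i,j}$ with $\{i,j\}\subset[2,N]$) are exactly the rows of $\Gamma''$. Since the entrywise product with $\Delta_1$ acts row-wise in the sense that it does not mix rows, this partition carries over to $\Delta_1\circ\Gamma$: up to a permutation of rows we can write
\[
\Delta_1\circ\Gamma = \begin{pmatrix}\Delta_1\circ\Gamma'\\ \Delta_1\circ\Gamma''\end{pmatrix}.
\]
The plan is to prove both directions of the claim from this block structure using only elementary facts about the spectral norm.

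For the forward direction, I would use the standard fact that the spectral norm of any (contiguous or not) row-submatrix is at most the spectral norm of the full matrix. Concretely, if $P'$ and $P''$ are the projections onto the rows labeled by $\cD_{1,j}$ (for all $j\ge 2$) and $\cD_{i,j}$ (for $\{i,j\}\subset[2,N]$) respectively, then $\Delta_1\circ\Gamma' = P'(\Delta_1\circ\Gamma)$ (thought of as matrices on the same row space) and likewise for $\Gamma''$. Since $\|P'\|,\|P''\|\le 1$, we get
\[
\|\Delta_1\circ\Gamma'\|\le\|\Delta_1\circ\Gamma\|\qquad\text{and}\qquad\|\Delta_1\circ\Gamma''\|\le\|\Delta_1\circ\Gamma\|,
\]
so $\|\Delta_1\circ\Gamma\|\in\O(1)$ immediately implies both $\|\Delta_1\circ\Gamma'\|,\|\Delta_1\circ\Gamma''\|\in\O(1)$.

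For the reverse direction, I would use the block-row identity $M^*M = A^*A + B^*B$ applied with $M=\Delta_1\circ\Gamma$, $A=\Delta_1\circ\Gamma'$, and $B=\Delta_1\circ\Gamma''$. The triangle inequality for the spectral norm then gives
\[
\|\Delta_1\circ\Gamma\|^2 = \|M^*M\|\le\|A^*A\|+\|B^*B\| = \|\Delta_1\circ\Gamma'\|^2 + \|\Delta_1\circ\Gamma''\|^2,
\]
so $\|\Delta_1\circ\Gamma\|\le\sqrt{\|\Delta_1\circ\Gamma'\|^2+\|\Delta_1\circ\Gamma''\|^2}\in\O(1)$ whenever both terms under the square root are $\O(1)$.

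The argument is really a direct consequence of the block-row structure, so there is no substantive obstacle; the only thing to be mildly careful about is that $\Gamma'$ and $\Gamma''$ are defined as submatrices with different row dimensions than $\Gamma$, so one should phrase the inequality either by zero-padding (viewing $\Delta_1\circ\Gamma'$ and $\Delta_1\circ\Gamma''$ as matrices whose nonzero rows form a partition of those of $\Delta_1\circ\Gamma$) or by invoking the elementary submatrix-norm inequality directly. Either formulation makes both directions a few lines, and the ``$\O(1)$'' conclusion then follows because $\sqrt{\O(1)+\O(1)}\in\O(1)$.
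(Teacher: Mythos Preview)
Your argument is correct and is precisely the elementary block-row observation the paper has in mind; in fact the paper states this claim without proof, treating it as immediate from the fact that the rows of $\Delta_1\circ\Gamma$ are partitioned into those of $\Delta_1\circ\Gamma'$ and $\Delta_1\circ\Gamma''$. Your write-up simply makes explicit the two-line submatrix/block inequalities that justify this.
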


Let
$R'=\Rep(\S_{[2,N]}/\S_{[3,N]})$
and $R''=\Rep(\S_{[N]\setminus\{3\}}/(\S_{\{1,2\}}\times\S_{[4,N]}))$ be transversals of the left cosets of $\S_{[3,N]}$ in $\S_{[2,N]}$ and
of $\S_{\{1,2\}}\times\S_{[4,N]}$ in $\S_{[N]\setminus\{3\}}$, respectively.
Similarly to \refeqn{GammaViaBlocks}, we have
\begin{equation}
\label{eqn:GammaHalves}
\Delta_1\circ\Gamma'
=\sum_{\pi\in R'}
U_{\pi} (\Delta_1\circ\Gamma_{1,2}) V_{\pi^{-1}}
\qqAnd
\Delta_1\circ\Gamma''
=
U_{(13)}\Big(
\sum_{\pi\in R''}
U_{\pi} (\Delta_3\circ\Gamma_{1,2}) V_{\pi^{-1}}
\Big)V_{(13)},
\end{equation}
which imply
\begin{align} 
\label{eqn:Half1Norm} &
\big\|\Delta_1\circ\Gamma'\big\|^2 =
\big\|(\Delta_1\circ\Gamma')^*(\Delta_1\circ\Gamma')\big\|
=
\Big\|
\sum_{\pi\in R'}
V_{\pi} (\Delta_1\circ\Gamma_{1,2})^* (\Delta_1\circ\Gamma_{1,2}) V_{\pi^{-1}}
\Big\|,
\\ \label{eqn:Half2Norm} &
\big\|\Delta_1\circ\Gamma''\big\|^2 =
\big\|(\Delta_1\circ\Gamma'')^*(\Delta_1\circ\Gamma'')\big\|
=
\Big\|
\sum_{\pi\in R''}
V_{\pi} (\Delta_3\circ\Gamma_{1,2})^* (\Delta_3\circ\Gamma_{1,2}) V_{\pi^{-1}}
\Big\|.
\end{align}
Therefore, we have to consider $\Delta_1\circ\Gamma_{1,2}$ and $\Delta_3\circ\Gamma_{1,2}$.

\subsection{\texorpdfstring
{Commutativity with the action of $\Delta_i$}
{Commutativity with the action of the difference matrix}
} \label{sec:DeltaComm}

Instead of $\Delta_i$, let us first consider the action of $\overline\Delta_i$. For
$i\in[N]$ and $s\in\Sigma$, let $\hat\Pi_i^s$ be the projector on all $y\in\cD_0$ such that $y_i=s$. Then, due to the particular way we define the bijection $f$, we have
\begin{equation}
\label{eqn:DeltaViaProj}
\overline\Delta_i\circ\Gamma_{1,2}= \sum\nolimits_{s\in\Sigma}\hat\Pi^s_i \Gamma_{1,2} \hat\Pi^s_i
\quad\text{whenever}\quad i\neq 2
\qquad\text{and}\qquad
\overline\Delta_2\circ\Gamma_{1,2}= \sum\nolimits_{s\in\Sigma}\hat\Pi^s_1\Gamma_{1,2}\hat\Pi^s_2.
\end{equation}
Note that $\hat\Pi_i^s$ commutes with every $\Pi_{\rho_{j_1\!\ldots j_m}}$ whenever $i\in\{j_1,\ldots,j_m\}$.
Hence, for $i\in\{j_1,\ldots,j_m\}\setminus\{2\}$ and every $N!\times N!$ matrix $A$, we have 
\begin{equation}
\label{eqn:DeltaComm}
  \Delta_i\circ(\Pi_{\rho_{j_1\!\ldots j_m}}A) =\Pi_{\rho_{j_1\!\ldots j_m}}(\Delta_i\circ A)
\qquad\text{and}\qquad
  \Delta_i\circ(A\Pi_{\rho_{j_1\!\ldots j_m}}) =(\Delta_i\circ A)\Pi_{\rho_{j_1\!\ldots j_m}}.
\end{equation}

\subsection{\texorpdfstring
{Relations among irreps of $\S_{[3,N]}\times\S_\Sigma$ within an isotypical subspace}
{Relations among irreps of S\_[3,N] x S\_Sigma within an isotypical subspace} }\label{sec:overlaps1}

We are interested to see how $\Delta_1$ acts on $\Gamma_{1,2}$, which 
 requires us to consider how it acts on $\Pi^\parA_{id,\parC_{12}}$ and $\Pi^\parA_{sgn,\parC_{12}\leftarrow id,\parC_{12}}$. Unfortunately, this action is hard to calculate directly, therefore we express $\Pi^\parA_{id,\parC_{12}}$ and $\Pi^\parA_{sgn,\parC_{12}\leftarrow id,\parC_{12}}$ as linear combinations of certain operators on which the action of $\Delta_1$  is easier to calculate.

Consider $\parA\vdash N$ and $\parC\ll_{rc}\parA$.
The projector $\Pi^\parA_{\parC_{12}}$ projects onto the isotypical subspace of $\cY$ corresponding to the irrep $\parC\times\parA$ of $\S_{[3,N]}\times\S_\Sigma$, and this subspace contains two instances of this irrep. 
There are as many degrees of freedom in splitting this subspace in half so that each half  corresponds to a single instance of the irrep as in splitting $\R^2$ in orthogonal one-dimensional subspaces.
We already considered one such split, 
$\Pi^\parA_{\parC_{12}}=\Pi^\parA_{id,\parC_{12}}+\Pi^\parA_{sgn,\parC_{12}}$,
 and now let us relate it to another.

Let $\parB,\parB'\vdash N-1$ be such that $\parC<\parB<\parA$, $\parC<\parB'<\parA$, and $\parB$ appears after $\parB'$ in the lexicographical order.
Then $\Pi^\parA_{\parC_{12},\parB_{1}}$ and $\Pi^\parA_{\parC_{12},\parB'_{1}}$ project onto two orthogonal instances of the irrep $\parC\times\parA$, and 
$\Pi^\parA_{\parC_{12}}=\Pi^\parA_{\parC_{12},\parB_{1}}+\Pi^\parA_{\parC_{12},\parB'_{1}}$.
Note that $V_{(12)}$ commutes with $\Pi^\parA_{\parC_{12}}$ and that $\Pi^\parA=\Pi_\parA$. The {\em orthogonal form} \cite[Section 3.4]{james:symmetric} of the irrep $\parA$ tells us that $V_{(12)}$ restricted to the isotypical subspace corresponding to $\parC\times\parA$ is 
\begin{equation} \label{eqn:V12rest}
V_{(12)}\big|_{\parC_{12}\times\parA} = \frac{1}{d_{\parA,\parC}}
\Big( \Pi^\parA_{\parC_{12},\parB'_{1}} - \Pi^\parA_{\parC_{12},\parB_{1}}
+ 
\sqrt{d^2_{\parA,\parC}-1}\,
\Pi^\parA_{\parC_{12},\parB'_{1}\leftrightarrow\parC_{12},\parB_{1}}
\Big).
\end{equation}
Expression \refeqn{V12rest}, in effect, defines the global phase of transporters $\Pi^\parA_{\parC_{12},\parB'_{1}\leftarrow\parC_{12},\parB_{1}}$ and $\Pi^\parA_{\parC_{12},\parB'_{1}\leftarrow\parC_{12},\parB_{1}}$.

Recall that $\Pi_{id}=(\I+V_{(12)})/2$, and therefore
\begin{equation}
\label{eqn:PiId}
\Pi^\parA_{id,\parC_{12}}
= \frac{\Pi^\parA_{\parC_{12}}+V_{(12)}\big|_{\parC_{12}\times\parA}}{2}
= 
\frac{d_{\parA,\parC}-1}{2d_{\parA,\parC}}
\Pi^\parA_{\parC_{12},\parB_{1}} 
+ \frac{d_{\parA,\parC}+1}{2d_{\parA,\parC}}
\Pi^\parA_{\parC_{12},\parB'_{1}}
+ 
\frac{\sqrt{d^2_{\parA,\parC}-1}}{2d_{\parA,\parC}}
\Pi^\parA_{\parC_{12},\parB'_{1}\leftrightarrow\parC_{12},\parB_{1}}
\end{equation}
and
\begin{multline}
\label{eqn:PiSgn}
\Pi^\parA_{sgn,\parC_{12}\leftarrow id,\parC_{12}}
=\frac{2 d_{\parA,\parC}}{\sqrt{d^2_{\parA,\parC}-1}}
\Pi^\parA_{sgn,\parC_{12}} 
\Pi^\parA_{\parC_{12},\parB_{1}}
\Pi^\parA_{id,\parC_{12}} \\
 =
\frac{\sqrt{d_{\parA,\parC}^2-1}}{2d_{\parA,\parC}}
\Pi^\parA_{\parC_{12},\parB_{1}} 
- \frac{\sqrt{d_{\parA,\parC}^2-1}}{2d_{\parA,\parC}}
\Pi^\parA_{\parC_{12},\parB'_{1}}
+ 
\frac{d_{\parA,\parC}+1}{2d_{\parA,\parC}}
\Pi^\parA_{\parC_{12},\parB_{1}\leftarrow\parC_{12},\parB'_{1}}
-
\frac{d_{\parA,\parC}-1}{2d_{\parA,\parC}}
\Pi^\parA_{\parC_{12},\parB'_{1}\leftarrow\parC_{12},\parB_{1}}.
\end{multline}

\newcommand\dd{d_{\parEb,\parFb_{12}}} 

\subsection{\texorpdfstring
{Relations among irreps of $\S_{[4,N]}\times\S_\Sigma$ within an isotypical subspace}
{Relations among irreps of S\_[4,N] x S\_Sigma within an isotypical subspace}
}\label{sec:overlaps2}

We are also interested to see how $\Delta_3$ acts on $\Gamma_{1,2}$, which will require us to consider irreps of $\S_{[4,N]}\times\S_\Sigma$.
Let us now consider $k\in o(N)$, $\parE\vdash k$, and $\parF<\parE$. Recall that, according to our notation, $\parEB=(N-k,\parE)\vdash N$ and $\parFB_{123}=(N-k-2,\parF)_{123}\vdash N-3$ is obtained from $\parEB$ by removing two boxes in the first row and one box below the first row.

$V$ contains three instances of the irrep $\parFB_{123}\times\parEB$ of $\S_{[4,N]}\times\S_\Sigma$:
we have
\[
\begin{split}
\PiEb_{\parFB_{123}} 
& =
\PiEb_{\parFB_{123},\parEB_{12},(\parEB_1)}+ 
\PiEb_{\parFB_{123},\parFB_{12},\parEB_{1}}+
\PiEb_{\parFB_{123},(\parFB_{12}),\parFB_{1}}
=
\PiEb_{id,\parFB_{123},\parEB_{3}}+ 
\PiEb_{sgn,\parFB_{123},\parEB_{3}}+
\PiEb_{(id),\parFB_{123},\parFB_{3}},
\end{split}
\]
where each projector (other than $\PiEb_{\parFB_{123}}$) projects on a single instance of the irrep and the subscripts in parenthesis are optional. These
two decompositions follow essentially the chain of restrictions
$
\S_{[N]}\rightarrow \S_{[2,N]}\rightarrow \S_{[3,N]}\rightarrow \S_{[4,N]}
$ 
and $\S_{[N]}\rightarrow \S_{[N]\setminus\{3\}}\rightarrow \S_{\{1,2\}}\times\S_{[4,N]}\rightarrow \S_{[4,N]},
$
respectively.

From the orthogonal form of the irrep $\parEB$, we get that the restriction of $V_{(12)}$ and $V_{(23)}$ to the isotypical subspace corresponding to $\parFB_{123}\times\parEB$ is, respectively,
\begin{align*}
&
V_{(12)}\big|_{\parFB_{123}\times\parEB}
= \PiEb_{\parFB_{123},\parEB_{12}}
+ \frac{1}{\dd}
\Big(
\PiEb_{\parFB_{123},\parFB_{12},\parEB_{1}}
-  \PiEb_{\parFB_{123},\parFB_{1}}
+\sqrt{\dd^2-1}\,
\PiEb_{\parFB_{123},\parFB_{12},\parEB_{1}\leftrightarrow\parFB_{123},\parFB_{1}}
\Big),
\\&
V_{(23)}\big|_{\parFB_{123}\times\parEB}
= \frac{1}{\dd-1}
\Big(
\PiEb_{\parFB_{123},\parEB_{12}}
-
\PiEb_{\parFB_{123},\parFB_{12},\parEB_{1}}
+\sqrt{(\dd-1)^2-1}\,
\PiEb_{\parFB_{123},\parEB_{12}\leftrightarrow\parFB_{123},\parFB_{12},\parEB_{1}}
\Big)
+ \PiEb_{\parFB_{123},\parFB_{1}},
\end{align*}
where the global phases of the transporters in the expression for $V_{(12)}\big|_{\parFB_{123}\times\parEB}$ are consistent with \refeqn{V12rest}.
Therefore we can calculate the ``overlap'' of $\PiEb_{\parFB_{123},\parEB_{12}}$ and
\[
\PiEb_{id,\parFB_{123},\parEB_{3}}
=
V_{(13)} \big(\I+V_{(23)}\big)\PiEb_{\parFB_{123},\parEB_{1}} V_{(13)} \big/2
=
V_{(23)}V_{(12)}
\big(\I+V_{(23)}\big)
\big(\PiEb_{\parFB_{123},\parEB_{12}}+\PiEb_{\parFB_{123},\parFB_{12},\parEB_{1}}\big)
V_{(12)} V_{(23)}
\big/2
\]
to be
\begin{equation} \label{eqn:overlap}
\frac{\Tr\big[
\PiEb_{\parFB_{123},\parEB_{12}}
\PiEb_{id,\parFB_{123},\parEB_{3}}
\big]}{\dim\parFB_{123}\dim\parEB}
=\frac{2}{\dd(\dd-1)}.
\end{equation}
Since $\PiEb_{\parFB_{123},\parEB_{12}}=\Pi_{id}\PiEb_{\parFB_{123},\parEB_{12}}$,
we have 
\begin{equation}
\label{eqn:ThetaDec1}
\PiEb_{\parFB_{123},\parEB_{12}} =
\PiEb_{\parFB_{123},\parFB_{3}}
+ \frac{2}{\dd^2-\dd}
\Big(
\PiEb_{id,\parFB_{123},\parEB_{3}} - \PiEb_{\parFB_{123},\parFB_{3}}
\Big)
+\frac{\sqrt{2\big(\dd^2-\dd-2\big)}}{\dd^2-\dd}
\PiEb_{\parFB_{123},\parFB_{3}\leftrightarrow id,\parFB_{123},\parEB_{3}}.
\end{equation}

\subsection{\texorpdfstring
{Summing the permutations of $(\Delta_1\circ\Gamma_{1,2})^*(\Delta_1\circ\Gamma_{1,2})$}
{Summing the permutations}
}

We will express $(\Delta_1\circ\Gamma_{1,2})^*(\Delta_1\circ\Gamma_{1,2})$ as a linear combination of projectors $\PiA_{\parC_{12},\parB_1}$ and transporters $\PiA_{\parC_{12},\parB'_1\leftarrow\parC_{12},\parB_1}$,
where $\parA\vdash N$, $\parC\ll_c\parA$, and $\parB,\parB'\vdash N-1$ are such that $\parC<\parB\!<\parA$ and $\parC<\parB'\!<\parA$ (we consider transporters only if $\parC\ll_{rc}\parA$, and thus $\parB\neq\parB'$). In order to calculate $\|\Delta_1\circ\Gamma'\|$ via \refeqn{Half1Norm}, we use
\begin{equation}
\label{eqn:PermSum1}
\begin{split} &
\frac{1}{N-1}  \sum_{\pi\in R'}
V_{\pi} \PiA_{\parC_{12},\parB_1} V_{\pi^{-1}}
=
\frac{1}{(N-1)!}  \sum_{\pi\in\S_{[2,N]}}
V_{\pi} \PiA_{\parC_{12},\parB_1} V_{\pi^{-1}}
=
\frac{\Tr\big[
V_{\pi} \PiA_{\parC_{12},\parB_1} V_{\pi^{-1}} 
\big]}
{\Tr\big[ \PiA_{\parB_1} \big]}
\PiA_{\parB_1}
=
\frac{\dim\parC}
{\dim\parB}
\PiA_{\parB_1},
\\ &
\frac{1}{N-1}  \sum_{\pi\in R'}
V_{\pi} \Pi^\parA_{\parC_{12},\parB'_1\leftarrow\parC_{12},\parB_1} V_{\pi^{-1}}
=
\frac{1}{(N-1)!}  \sum_{\pi\in\S_{[2,N]}}
V_{\pi} \Pi^\parA_{\parC_{12},\parB'_1\leftarrow\parC_{12},\parB_1} V_{\pi^{-1}}
=
0.
\end{split}
\end{equation}
The equalities in \refeqn{PermSum1} hold because, first of all, 
$\PiA_{\parC_{12},\parB_1}$ and $\Pi^\parA_{\parC_{12},\parB'_1\leftarrow\parC_{12},\parB_1}$ are fixed under $\S_{[3,N]}\times\S_\Sigma$.
Second, $V$ as a representation of $\S_{[2,N]}\times\S_\Sigma$ is {\em multiplicity-free} (i.e., it contains each irrep at most once), and thus every operator on $\cY$ that is fixed under $\S_{[2,N]}\times\S_\Sigma$ can be expressed as a linear combination of projectors $\Pi^{\parA'}_{\parB''_1}$, where $\parA'\vdash N$ and $\parB''<\parA'$.
And third, for $\pi\in\S_{[2,N]}$, $V_\pi$ commutes with both $\PiA_{\parB_1}$ and $\PiA_{\parB'_1}$.

\section{Construction of the optimal adversary matrix}
\label{sec:sufficient}

In \refsec{Gamma12ViaProj} we showed that
\(
\beta_{id,\parC}^\parA/\alpha_{id,\parC}^\parA
=
\beta_{sgn,\parC}^\parA/\alpha_{sgn,\parC}^\parA
=
\sqrt{\binom{N}{2}\frac{\dim\parC}{\dim\parA}}
\).
We calculate $\dim\parC$ and $\dim\parA$ using the hook-length formula, and one can see that, given a fixed $\parD\vdash k$, $\dim\parDb$ can be expressed as a polynomial in $N$ of  degree $k$ and having the leading coefficient $1/h(\parD)$ (see \refeqn{dimParD}). Therefore we get that \refclm{GammaTransp} is equivalent to the following claim, which we prove in \refapp{necessary}.
\begin{clm}
\label{clm:alphas}
Suppose $\Gamma_{1,2}$ is given as in (\ref{eqn:Gamma12Blocks}), $\alpha^{(N)}_{id,(N-2)}=N^{-1/3}$, and $\Gamma$ is obtained from $\Gamma_{1,2}$ via \refeqn{GammaViaBlocks}. 
Consider $\parA\vdash N$ that has $\O(1)$ boxes below the first row and $\parC\ll_c\parA$.
In order for $\|\Delta_1\circ\Gamma\|\in\O(1)$ to hold,  
 we need to have 
\begin{enumerate}
\item \label{pt1}
$\alpha^\parA_{id,\parC}=N^{-1/3}+\O(1/N)$ if $\parA$ and $\parC$ are the same below the first row,
\item \label{pt2}
$\alpha^\parA_{id,\parC},\alpha^\parA_{sgn,\parC}=N^{-1/3}+\O(1/\sqrt{N})$ if $\parA$ has one box more below the first row than $\parC$,
\item \label{pt3}
$\alpha^\parA_{id,\parC},\alpha^\parA_{sgn,\parC}=\O(1)$ if $\parA$ has two boxes more below the first row than $\parC$.
\end{enumerate}
(Note that $\alpha^{(N)}_{id,(N-2)}=N^{-1/3}$ implies $\|\Gamma\|\geq\beta^{(N)}_{id,(N-2)}\in\Theta(N^{2/3})$.)
\end{clm}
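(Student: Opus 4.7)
The plan is to analyze $M'=(\Delta_1\circ\Gamma')^*(\Delta_1\circ\Gamma')$ and $M''=(\Delta_1\circ\Gamma'')^*(\Delta_1\circ\Gamma'')$ directly, since by \refclm{TwoHalves} the hypothesis $\|\Delta_1\circ\Gamma\|\in\O(1)$ is equivalent to $\|M'\|,\|M''\|\in\O(1)$. By \refeqn{Half1Norm} and \refeqn{Half2Norm}, $M'$ and $M''$ are sums of $\pi$-conjugates of $(\Delta_1\circ\Gamma_{1,2})^*(\Delta_1\circ\Gamma_{1,2})$ and $(\Delta_3\circ\Gamma_{1,2})^*(\Delta_3\circ\Gamma_{1,2})$, respectively, and each is fixed under the full action of $\S_{[N]}\times\S_\Sigma$. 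Hence by \refclm{VDecomp} and Schur's lemma both $M'$ and $M''$ are nonnegative real combinations of the isotypical projectors $\Pi^\parA_\parA$, and the $\O(1)$-norm condition says that every such coefficient is $\O(1)$.

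The core of the proof is to expand $\Delta_i\circ\Gamma_{1,2}=\Gamma_{1,2}-\overline\Delta_i\circ\Gamma_{1,2}$ (for $i\in\{1,3\}$) in a basis in which the averaging identities \refeqn{PermSum1} are easy to apply. Starting from \refeqn{Gamma12Blocks}, I would substitute \refeqn{PiId} and \refeqn{PiSgn} to rewrite $\Pi^\parA_{id,\parC_{12}}$ and $\Pi^\parA_{sgn,\parC_{12}\leftarrow id,\parC_{12}}$ in the chain basis $\{\Pi^\parA_{\parC_{12},\parB_1}\}\cup\{\Pi^\parA_{\parC_{12},\parB'_1\leftrightarrow\parC_{12},\parB_1}\}$, and then use \refeqn{DeltaViaProj} to evaluate $\overline\Delta_i\circ\Gamma_{1,2}$. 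For $i=3$, the action of $\overline\Delta_3$ is diagonal in $\Pi^\sigma_{\rho_3}$-projectors, so \refeqn{ThetaDec1} and \refeqn{overlap} must be invoked to shift into the $\S_{[4,N]}$-chain basis before applying $\overline\Delta_3$ and then back again. Forming the Gram product and invoking \refeqn{PermSum1} kills all transporter cross-terms (by multiplicity-freeness of $V$ over the relevant subgroup), leaving
\[
M'\;\propto\;\sum_{\parA,\parB}q^{\parA,1}_{\parB}(\alpha)\,\Pi^\parA_{\parB_1},\qquad M''\;\propto\;\sum_{\parA,\parB}q^{\parA,3}_{\parB}(\alpha)\,\Pi^\parA_{\parB_1},
\]
where each $q^{\parA,i}_\parB$ is an explicit quadratic form in the coefficients $\alpha^\parA_{id,\parC},\alpha^\parA_{sgn,\parC}$ over the chains $\parC<\parB<\parA$, with coefficients that are rational functions of the $d_{\parA,\parC}$ and the dimensions $\dim\parDb$.

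Fixing $\parA=\parDb$ with $\parD\vdash k=\O(1)$, I would then use the hook-length formula \refeqn{hook} to expand the relevant dimensions as polynomials in $N$ of known degree and leading coefficient, and read off the constraints $q^{\parA,i}_\parB(\alpha)\in\O(1)$ order by order in $N$. The anchor $\alpha^{(N)}_{id,(N-2)}=N^{-1/3}$ pins the top-order value at the trivial $\parA=(N)$, and propagating up the lattice of pairs $(\parA,\parC)$ with $\parC\ll_c\parA$ yields the three cases: when $\parA$ and $\parC$ agree below the first row, only one unknown appears at top order and must equal $N^{-1/3}$ to error $\O(1/N)$; when they differ by one box, both $\alpha^\parA_{id,\parC}$ and $\alpha^\parA_{sgn,\parC}$ enter symmetrically and cancellation of the leading cross term in $q^{\parA,i}_\parB$ forces both to equal $N^{-1/3}$ to error $\O(1/\sqrt{N})$; when they differ by two boxes, the constraint becomes subleading and only $\O(1)$ is required. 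The main obstacle is the careful bookkeeping of the $1/d_{\parA,\parC}$ and $\sqrt{d_{\parA,\parC}^2-1}/d_{\parA,\parC}$ prefactors introduced by the orthogonal-form identity \refeqn{V12rest}; the $M''$-analysis is the more delicate of the two, since it requires an additional decomposition along the chain $\S_{[N]}\supset\S_{[N]\setminus\{3\}}\supset\S_{\{1,2\}}\times\S_{[4,N]}\supset\S_{[4,N]}$. Throughout, \refeqn{gamma2} guarantees that perturbing $\O(1)$ of the $\alpha$'s by $\O(1)$ moves $\|\Delta_1\circ\Gamma\|$ by only $\O(1)$, which is exactly why the necessary conditions naturally appear with the additive tolerances stated in cases 1--3.
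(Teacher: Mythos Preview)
Your high-level plan---compute the Gram operators $M',M''$, decompose them in a basis compatible with \refeqn{PermSum1}, and read off constraints---is the right shape, and it is what the paper does too. But there is a genuine gap in how you describe the quadratic forms. You assert that ``each $q^{\parA,i}_\parB$ is an explicit quadratic form in the coefficients $\alpha^\parA_{id,\parC},\alpha^\parA_{sgn,\parC}$ over the chains $\parC<\parB<\parA$,'' i.e.\ that the $\parA$-block of $M'$ depends only on the $\alpha$'s with that same superscript $\parA$. This is false, and the falsity is the entire engine of the proof. The superoperator $\overline\Delta_1\circ(\cdot)$ is $\S_\Sigma$-equivariant, so $\Delta_1\circ\Gamma_{1,2}$ is block-diagonal in the $\S_\Sigma$-isotypicals $\Pi^{\parA}$, but the $\parA$-block of $\Delta_1\circ\Gamma_{1,2}$ receives contributions from the $\parA'$-blocks of $\Gamma_{1,2}$ for \emph{all} $\parA'$ sharing a $\parB<\parA,\parA'$. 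Concretely, the paper computes (Appendix~A)
\[
\Delta_1\circ\Pi^{\parA'}_{\parC_{12},\parB_1}=\Pi^{\parA'}_{\parC_{12},\parB_1}-\frac{\dim\parA'}{N\dim\parB}\sum_{\parA''>\parB}\Pi^{\parA''}_{\parC_{12},\parB_1},
\]
so the ``diagonal'' scalar $\Pi^\parA_{\parC_{12},\parB_1}(\Delta_1\circ\Gamma_{1,2})\Pi^\parA_{\parC_{12},\parB_1}$ involves $\Tr[\Pi_{\parC_{12},\parB_1}\Gamma_{1,2}]$, a sum over all $\parA'>\parB$. Your decomposition never produces this cross-$\parA'$ term, because you plan to apply $\overline\Delta_i$ via \refeqn{DeltaViaProj} but never derive the analogue of \refeqn{DeltaOnProj}; without it you cannot see the coupling.

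This coupling is precisely what drives the argument. Point~\ref{pt1} is proved in the paper by an \emph{induction on $k$}: Lemma~helper4 (the $\Gamma''$ analogue of Corollary~helper2) shows $|\alpha^{\parEb}_{id,\parEb_{12}}-\alpha^{\parFb}_{id,\parFb_{12}}|\in\O(1/N)$ for $\parF<\parE$, which propagates the anchor $\alpha^{(N)}_{id,(N-2)}=N^{-1/3}$ up through all $\parE\vdash\O(1)$. If the $q^{\parA,i}_\parB$ were local in $\parA$ as you claim, no such relation between distinct $\parE$'s could ever emerge from the constraint $q\in\O(1)$, and the induction would have no fuel. Similarly, Point~\ref{pt2} hinges on Corollary~helper2 comparing $\parA=\parFb$ with $\parA=\parEb$ for the same $(\parC,\parB)=(\parFb_{12},\parFb_1)$. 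Finally, your invocation of \refeqn{gamma2} to explain the additive tolerances $\O(1/N),\O(1/\sqrt N),\O(1)$ is a red herring: those tolerances come from the dimension ratios $\sqrt{\dim\parB/((N-1)\dim\parC)}$ in Lemma~helper1 and its analogues, not from $\gamma_2$. (Also, $M'$ and $M''$ are fixed only under $\S_{[2,N]}\times\S_\Sigma$, not all of $\S_{[N]}\times\S_\Sigma$---your later decomposition into $\Pi^\parA_{\parB_1}$ is correct, but your opening sentence contradicts it.)
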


Consider $k\in o(N)$ and $\parE\vdash k$. Claims \ref{clm:GammaTransp} and \ref{clm:alphas} hint that for the optimal adversary matrix we could choose coefficients $\alpha_{id,\parEb_{12}}^\parEb\approx \alpha_{id,\parEb_{12}}^\parDb\approx \alpha_{sgn,\parEb_{12}}^\parDb$ whenever $\parD>\parE$ and $\alpha_{id,\parEb_{12}}^\parDb= \alpha_{sgn,\parEb_{12}}^\parDb=0$ whenever $\parD\gg\parE$. Let us do that. For $\parD>\parE$, note that $\parEB_{12}<\parEB_1<\parDB$, $\parEB_{12}<\parDB_1<\parDB$, and $\parEB_1$ appears after $\parDB_1$ in the lexicographic order, and also note that $d_{\parDB,\parEB_{12}}\geq N-2k-1$ (equality is achieved by $\parE=(k)$ and $\parD=(k+1)$). Therefore, according to \refeqn{PiId} and \refeqn{PiSgn}, we have
\begin{multline*}
\Pi^\parEb_{id,\parEb_{12}}+
\sum_{\parD>\parE}\big(
\Pi^\parDB_{id,\parEB_{12}} +
\Pi^\parDB_{sgn,\parEB_{12}\leftarrow id,\parEB_{12}}
\big) =
\Pi^\parEb_{\parEb_{12}}+
\sum_{\parD>\parE}\big(
\Pi^\parDB_{\parEB_{12},\parEB_1} +
\Pi^\parDB_{\parEB_{12},\parEB_1\leftarrow \parEB_{12},\parDB_1}
\big)
 + \O(1/N) \\
=
\Pi^\parEb_{\parEb_{12}}+
\sum_{\parD>\parE}
2\Pi^\parDB_{\parEB_{12},\parEB_1}\Pi_{id} + \O(1/N)
=
2\Pi_{\parEB_{12},\parEB_1}\Pi_{id}
- \Pi^\parEb_{\parEb_{12}} + \O(1/N),
\end{multline*}
where the last equality is due to $\Pi^\parEB_{\parEB_{12}}=\Pi^\parEB_{\parEB_{12},\parEB_1}=\Pi^\parEB_{id,\parEB_{12}}$ and 
\(
\Ind\nolimits_{\,\S_{[2,N]}}^{\,\S_{[N]}} \!\parEB_1 \cong \parEB\oplus\bigoplus\nolimits_{\parD>\parE}\parDB
\),
that is, the branching rule.
Thus we choose to construct $\Gamma_{1,2}$ as a linear combination of matrices
\[
2\Pi_{\parEB_{12},\parEB_1}\Pi_{id}
- \Pi^\parEb_{\parEb_{12}}
= \Pi^\parEb_{\parEb_{12}}
+
\sum_{\parD>\parE}
\bigg(
\frac{d_{\parDB,\parEB_{12}}-1}{d_{\parDB,\parEB_{12}}}
 \Pi^\parDB_{id,\parEB_{12}} +
\frac{\sqrt{d_{\parDB,\parEB_{12}}^2-1}}{d_{\parDB,\parEB_{12}}}
\, \Pi^\parDB_{sgn,\parEB_{12}\leftarrow id,\parEB_{12}}
\bigg).
\]
(At first glance, it may seem that the matrix on the left hand side does not ``treat'' indices $1$ and $2$ equally, but that is an illusion due to the way we define the bijection $f$.)

\begin{thm}
\label{thm:main}
Let $\Gamma$ be constructed via \refeqn{GammaViaBlocks} from
\[
  \Gamma_{1,2} = \sum_{k=0}^{N^{2/3}}
 \frac{N^{2/3}-k}{N}
   \sum_{\parE\vdash k} (2\Pi_{\parEB_{12},\parEB_1}\Pi_{id} - \Pi^\parEB_{\parEB_{12}}). 
\]
Then $\|\Gamma\|\in \Omega(N^{2/3})$ and $\|\Delta_1\circ\Gamma\|\in\O(1)$, and therefore $\Gamma$ is, up to constant factors, an optimal adversary matrix for {\sc Element Distinctness}.
\end{thm}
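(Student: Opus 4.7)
The plan is to verify $\|\Gamma\|\in\Omega(N^{2/3})$ and $\|\Delta_1\circ\Gamma\|\in\O(1)$ separately. For the lower bound, observe that only the $k=0$, $\parE=\emptyset$ summand of $\Gamma_{1,2}$ contains a $\Pi^{(N)}_{id,(N-2)_{12}}$-component, yielding $\alpha^{(N)}_{id,(N-2)}=N^{-1/3}$. Via the identity $\gamma^{\parA}_{id,\parC}=\sqrt{\binom{N}{2}\dim\parC/\dim\parA}$ from \refsec{Gamma12ViaProj}, this translates into $\beta^{(N)}_{id,(N-2)}=N^{-1/3}\sqrt{\binom{N}{2}}\in\Theta(N^{2/3})$; and since the transporters in \refeqn{GammaBlocks} act between orthogonal isotypical subspaces of $\S_{[N]}\times\S_\Sigma$, $\|\Gamma\|$ dominates this single coefficient, giving $\|\Gamma\|\geq\beta^{(N)}_{id,(N-2)}\in\Omega(N^{2/3})$.

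For the upper bound, I would invoke \refclm{TwoHalves} to reduce to bounding $\|\Delta_1\circ\Gamma'\|$ and $\|\Delta_1\circ\Gamma''\|$, and for each I would compute the relevant building block $\Delta_1\circ M_\parE$ or $\Delta_3\circ M_\parE$, where $M_\parE=2\Pi_{\parEB_{12},\parEB_1}\Pi_{id}-\Pi^{\parEB}_{\parEB_{12}}$. Writing $2\Pi_{id}=\I+V_{(12)}$, the commutation \refeqn{DeltaComm} of $\hat\Pi_1^s$ with the index projectors forces $\Delta_1\circ\Pi_{\parEB_{12},\parEB_1}=0$; and $\Delta_1\circ(\Pi_{\parEB_{12},\parEB_1}V_{(12)})=\Pi_{\parEB_{12},\parEB_1}V_{(12)}$ because applying $V_{(12)}$ to a column $y$ and then restricting to the $y_1=(f(y'))_1$ diagonal would force $y_1=y_2$, which is impossible in $\cD_0$. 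The alphabet projector $\Pi^\parEB$, however, does not commute with $\hat\Pi_1^s$, so $\Delta_1\circ\Pi^\parEB_{\parEB_{12}}$ contributes a nontrivial corrective term $\Pi^\parEB_{\parEB_{12}}-\Pi_{\parEB_{12}}(\overline\Delta_1\circ\Pi^\parEB)$ that I would expand using the $\S_\Sigma$-branching of $\parEB$ at position 1, producing pieces indexed by Young diagrams $\parEB'<\parEB$ with coefficient of order $1/N$ relative to $\Pi^\parEB_{\parEB_{12}}$.

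The heart of the argument is the near-cancellation of the two leading terms in $\Delta_1\circ M_\parE$. As a sanity check, on the all-ones vector $v_1\in\cY$ one has $\Pi_{(N-1)_1}V_{(12)}v_1=v_1$ and $\Pi^{(N)}_{(N-2)_{12}}v_1=v_1$ individually, yet the full combination $\Delta_1\circ M_\emptyset=\Pi_{(N-1)_1}V_{(12)}-\Pi^{(N)}_{(N-2)_{12}}+\Pi_{(N-1)_1}/N$ sends $v_1\mapsto v_1/N$. This $\O(1/N)$ residue per $\parE$ is exactly what is needed to compensate, after the conjugate sum in \refeqn{Half1Norm} over the transversal $R'$ of size $N-1$ and together with the decay $c_k^2\leq N^{-2/3}$ of the construction's coefficients, to yield $\|\Delta_1\circ\Gamma'\|\in\O(1)$. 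I would then carry out the same scheme with \refeqn{PermSum1}-style averaging to sum over $R'$, using orthogonality of distinct-$\parE$ projectors to handle the double sum.

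The analogous bound $\|\Delta_1\circ\Gamma''\|\in\O(1)$ follows by computing $\Delta_3\circ M_\parE$. Since index 3 is not a subscript of $\Pi_{\parEB_{12},\parEB_1}$, I would first rewrite the latter in a basis where index 3 appears as a subscript, using the decomposition \refeqn{ThetaDec1} of \refsec{overlaps2}; then the commutation of $\hat\Pi_3^s$ with the rewritten projectors, combined with the impossibility of $y_1=y_3$ in $\cD_0$, enables the same cancellation pattern, and one sums over $R''$ via \refeqn{Half2Norm}. The main obstacle throughout is the bookkeeping of coefficients for arbitrary $\parE\vdash k$ with $k\leq N^{2/3}$: one must combine the hook-length formulas with the distance factors $d_{\parDB,\parEB_{12}}$ appearing in \refsec{overlaps1} and \refsec{overlaps2} to confirm that the cancellation residues are uniformly $\O(1/N)$ and that the weighted sum $\sum_k c_k^2\sum_{\parE\vdash k}(\cdot)$ remains $\O(1)$ after inflation by the $N-1$ (respectively $\binom{N-1}{2}$) transversal factor, so that no intermediate terms inflate the bound to $\Omega(N^{1/6})$ as they would for a naive choice of coefficients.
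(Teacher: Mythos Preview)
Your lower bound on $\|\Gamma\|$ and the high-level plan---split via \refclm{TwoHalves}, analyze $\Delta_i$ on each $M_\parE$, then average over the transversal---agree with the paper. The gap is your central quantitative claim that $\Delta_1\circ M_\parE$ has operator norm $\O(1/N)$. Your sanity check tests only the all-ones vector, which lies in the image of $\Pi^{(N)}_{(N)}$; on the orthogonal complement the cancellation fails. Concretely, $\Pi_{\parEb_{12},\parEb_1}V_{(12)}-\Pi^{\parEB}_{\parEB_{12}}=\sum_{\parD>\parE}\Pi^{\parDb}_{\parEb_{12},\parEb_1}V_{(12)}$ (using $\Pi^{\parEB}_{\parEB_{12}}\Pi_{id}=\Pi^{\parEB}_{\parEB_{12}}$ and the branching rule), and this has spectral norm $1$, not $\O(1/N)$; the alphabet-branching correction you add is indeed $\O(1/N)$ but does not touch this piece. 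Hence your accounting ``$\O(1/N)$ residue, inflated by $|R'|=N-1$'' collapses. The paper's mechanism is different: it keeps the $\Theta(1)$-norm residue (via the $\Delta_i\diamond$ approximation it never computes $\Delta_1\circ\Pi^{\parEB}_{\parEB_{12}}$), expands it through \refeqn{V12rest}, and only \emph{after} the $R'$-average \refeqn{PermSum1} does smallness emerge: the dangerous $\Pi^{\parDb}_{\parEb_{12},\parDb_1}$ piece picks up $(N-1)\dim\parEb_{12}/\dim\parDb_1$, and summing this over $\parE<\parD$ using $\dim\parDb_1=\dim\parDb_{12}+\sum_{\parE<\parD}\dim\parEb_{12}$ together with \refclm{dimFraction} yields $\O(k)$ on each $\Pi^{\parDb}_{\parDb_1}$ block. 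Since blocks for different $k$ are orthogonal and $c_k^2\cdot k=\O(1)$, one obtains $(\Delta_1\diamond\Gamma')^*(\Delta_1\diamond\Gamma')\preceq\I/3$.

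For $\Gamma''$ your outline misses a structural ingredient. The paper does not simply rewrite $\Pi_{\parEb_{12},\parEb_1}$ via \refeqn{ThetaDec1}; it first splits $\Gamma=2\Gamma_\cA-\Gamma_\cB$, then for $(\Gamma_\cA)_{1,2}$ branches $\Pi_{\parEb_{12},\parEb_1}=\Pi_{\parEb_{123},\parEb_1}+\sum_{\parF<\parE}\Pi_{\parFb_{123},\parEb_{12},\parEb_1}$, applies the overlap identity \refeqn{ThetaDec2}, and crucially uses the telescoping rearrangement \refeqn{SumMove} across the $k$-sum to absorb the leading contribution into $\sum_{\parF}\Pi_{\parFb_{123},\parFb_{13}}$, on which $\Delta_3$ vanishes. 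Without this telescoping the main term survives with norm $\Theta(1)$ and ``the same cancellation pattern'' does not yield the $\O(1/N^2)$ square needed before the $\binom{N-1}{2}$-fold transversal sum in \refeqn{Half2Norm}.
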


For $\Gamma_{1,2}$ of \refthm{main} expressed in the form \refeqn{Gamma12Blocks}, we have $\alpha^{(N)}_{id,(N-2)}=N^{-1/3}$, and 
 therefore $\|\Gamma\|\in \Omega(N^{2/3})$. In the remainder of the paper, let us prove $\|\Delta_1\circ\Gamma'\|\in\O(1)$ and $\|\Delta_1\circ\Gamma''\|\in\O(1)$,
 which is sufficient due to \refclm{TwoHalves}.

\subsection{\texorpdfstring
{Approximate action of $\Delta_i$}
{Approximate action of the difference matrix}}

The precise calculation of $\Delta_1\circ\Gamma$ is tedious; we consider it in \refapp{necessary}. Here, however, it suffices to upper bound
 $\norm|\Delta_1\circ\Gamma|$ using the following trick first introduced in~\cite{belovs:adv-el-dist} and later used in~\cite{spalek:kSumLower,belovs:nonAdaptiveLG,spalek:adv-array,rosmanis:collision}.
 
For any matrix $A$ of the same dimensions as $\Delta_i$, we call a matrix $B$ satisfying
 \(
  \Delta_i\circ B=\Delta_i\circ A 
 \) 
 an {\em approximation} of $\Delta_i\circ A$ and we denote it with
 $\Delta_i\diamond A$.
From the fact \refeqn{gamma2} on the $\gamma_2$ norm, it follows that 
\(
  \norm|\Delta_i\circ A| 
  \leq 2\norm|\Delta_i\diamond A|.
\)
Hence, to show that $\|\Delta_1\circ\Gamma'\|\in \O(1)$ and $\|\Delta_1\circ\Gamma''\|\in \O(1)$, it suffices to show that
 $\|\Delta_1\diamond\Gamma'\|\in \O(1)$ and $\|\Delta_1\diamond\Gamma''\|\in \O(1)$ for any $\Delta_1\diamond\Gamma'$ and $\Delta_1\diamond\Gamma''$.
 That is, it suffices to show that we
 can change entries of $\Gamma'$ and $\Gamma''$ corresponding to $(x,y)$ with $x_1=y_1$ in a way that the 
 spectral norms of the resulting matrices are constantly bounded.

Note that we can always choose $\Delta_i\diamond A = A$
 and
\(
 \Delta_i\diamond(A+A')
  =\Delta_i\diamond A + \Delta_i\diamond A'.
\)
We will express $\Gamma_{1,2}$ as a linear combination of certain $N!\times N!$ matrices and, for every such matrix $A$, we will choose  $\Delta_i\diamond A = A$, except for the following three, for which we calculate the action of $\Delta_1$ or $\Delta_3$ precisely. We have
\[
\Delta_1\circ\Pi_{id}=V_{(12)}/2 ,\qquad
\Delta_3\circ\Pi_{\parFb_{123},\parFb_{3}} = 0,\qqAnd
\Delta_3\circ\Pi_{\parFb_{123},\parFb_{13}} = 0 
\]
due to $\Delta_1\circ\I=\Delta_3\circ\I=0$ and the commutativity relation \refeqn{DeltaComm}.

Due to \refeqn{DeltaComm}, we also have $\Delta_3\circ(A\Pi_{id})=(\Delta_3\circ A)\Pi_{id}$ for every $N!\times N!$ matrix $A$. One can see that, given any choice of $\Delta_3\diamond A$, we can choose $\Delta_3\diamond(A\Pi_{id})=(\Delta_3\diamond A)\Pi_{id}$.

\subsection{\texorpdfstring
{Bounding $\|\Delta_1\circ\Gamma'\|$}
{Bounding the first part}}

For $k\leq N^{2/3}$ and $\parE\vdash k$, define $N!\times N!$ matrices $(\Gamma_\parE)_{1,2}$ and $(\Gamma_k)_{1,2}$ such that 
\[
\Gamma_{1,2}=\sum_{k=0}^{N^{2/3}}
\frac{N^{2/3}-k}{N}(\Gamma_k)_{1,2},
\qquad
(\Gamma_k)_{1,2}=\sum_{\parE\vdash k} (\Gamma_\parE)_{1,2},
\qqAnd
(\Gamma_\parE)_{1,2}
=2\Pi_{\parEb_{12},\parEb_1}\Pi_{id} - \Pi^\parEb_{\parEb_{12}}.
\]
The projector $\Pi_{\parEb_{12},\parEb_{1}}$ commutes with the action of $\Delta_1$, therefore
we can choose
\begin{multline*}
\Delta_1\diamond (\Gamma_\parE)_{1,2}
 =
2\Pi_{\parEb_{12},\parEb_1}(\Delta_1\circ \Pi_{id}) - \Pi^\parEb_{\parEb_{12}}
 = 
 \Pi_{\parEb_{12},\parEb_1}V_{(12)}  -\PiEb_{\parEb_{12}} \\
 = \sum_{\parD>\parE} \Pi^\parDb_{\parEb_{12},\parEb_1} V_{(12)} 
 = \sum_{\parD>\parE} \bigg(
-\frac{1}{d_{\parDb,\parEb_{12}}}\Pi^\parDb_{\parEb_{12},\parEb_1}
+\frac{\sqrt{d_{\parDb,\parEb_{12}}^2-1}}{d_{\parDb,\parEb_{12}}}\Pi^\parDb_{\parEb_{12},\parEb_1\leftarrow\parEb_{12},\parDb_1}
\bigg),
\end{multline*}
where the third equality is due to the branching rule and both $\PiEb_{\parEb_{12}}=\PiEb_{\parEb_{12}}\Pi_{id}$ and $\Pi_{id}V_{(12)}=\Pi_{id}$,
and the last equality comes from \refeqn{V12rest}.
To estimate the norm of $\Delta_1\diamond\Gamma'$ via 
 \refeqn{Half1Norm}, we have
\begin{align}
 & \notag
  \sum_{\pi\in R'}V_\pi(\Delta_1\diamond (\Gamma_\parE)_{1,2})^*(\Delta_1\diamond (\Gamma_\parE)_{1,2})V_{\pi^{-1}}
\\[-5pt] & \hspace{25pt} \notag
\preceq
 \sum_{\parD>\parE}
 \sum_{\pi\in R'}V_\pi
 \Big(
\frac{1}{d_{\parDb,\parEb_{12}}^2}
\Pi^\parDb_{\parEb_{12},\parEb_1}
+
\Pi^\parDb_{\parEb_{12},\parDb_1}
-\frac{\sqrt{d_{\parDb,\parEb_{12}}^2-1}}{d_{\parDb,\parEb_{12}}^2}
\Pi^\parDb_{\parEb_{12},\parEb_1\leftrightarrow\parEb_{12},\parDb_1} 
\Big)
 V_{\pi^{-1}}
\\ & \hspace{50pt} \notag
=  
(N-1)
\sum_{\parD>\parE} \Big(
\frac{1}{d_{\parDb,\parEb_{12}}^2}
\frac{\dim\parEb_{12}}{\dim\parEb_{1}}
\Pi^\parDb_{\parEb_1}
+ 
\frac{\dim\parEb_{12}}{\dim\parDb_{1}}
\Pi^\parDb_{\parDb_1}
\Big)
\\ & \hspace{75pt} \label{eqn:GammaLambda}
\preceq
\frac{1}{N-o(N)} 
\sum_{\parD>\parE} \Pi^\parDb_{\parEb_1}
+ 
(N-1)
\sum_{\parD>\parE}
\frac{\dim\parEb_{12}}{\dim\parDb_{1}}
\Pi^\parDb_{\parDb_1},
\end{align}
where  $\preceq$ denotes the {\em semidefinite ordering}, the equality in the middle comes from \refeqn{PermSum1}, and the last inequality is due to $\dim\parEb_{12}\leq \dim\parEb_{1}$ and $d_{\parDb,\parEb_{12}}\geq N-2k-1$.

\begin{clm} \label{clm:dimFraction} Let $\parD\vdash k$. Then
\(
1-\dim\parDb_1/\dim\parDb\leq 2k/N.
\)
\end{clm}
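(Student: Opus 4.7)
The plan is to apply the hook-length formula \refeqn{hook} directly. Since $\parDb = (N-k,\parD)$ and $\parDb_1 = (N-k-1,\parD)$, the ratio in question is
\[
\frac{\dim\parDb_1}{\dim\parDb} = \frac{(N-1)!\,/\,h(\parDb_1)}{N!\,/\,h(\parDb)} = \frac{1}{N}\cdot\frac{h(\parDb)}{h(\parDb_1)},
\]
so the claim reduces to $h(\parDb) \geq (N-2k)\,h(\parDb_1)$. The two diagrams differ in a single box, namely the rightmost corner $(1,N-k)$ of the first row; this corner has hook length $1$ in $\parDb$. For every other box, I would argue that hook lengths in rows $i \geq 2$ are identical in the two diagrams (the removed box sits in an otherwise empty column because $\parD(1) \leq k < N-k$), while for each first-row box $(1,j)$ with $j \leq N-k-1$ we have $h_{\parDb}(1,j) = h_{\parDb_1}(1,j) + 1$, since the row lengthens by one while $\parDb^T(j) = \parDb_1^T(j)$ is unaffected.

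Combining these observations yields a clean product formula,
\[
\frac{h(\parDb)}{h(\parDb_1)} = \prod_{j=1}^{N-k-1}\frac{h_{\parDb_1}(1,j)+1}{h_{\parDb_1}(1,j)},
\]
which I would then split at $j = \parD(1)$. For $\parD(1) < j \leq N-k-1$, column $j$ contains only its row-$1$ box, so $h_{\parDb_1}(1,j) = N-k-j$, and this portion of the product telescopes exactly to $N-k-\parD(1)$. For $1 \leq j \leq \parD(1)$, every factor is at least $1$. Hence
\[
\frac{h(\parDb)}{h(\parDb_1)} \;\geq\; N - k - \parD(1) \;\geq\; N - 2k,
\]
using $\parD(1) \leq |\parD| = k$ in the last inequality. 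Dividing by $N$ gives $\dim\parDb_1/\dim\parDb \geq 1 - 2k/N$, which is the claim.

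The argument is essentially a one-line telescoping once the local structure of the hook-length change is identified, so there is no serious obstacle; the only points requiring care are verifying that hook lengths outside the first row are genuinely unchanged (which needs $\parD(1) < N-k$) and handling the edge case $N \leq 2k$, where $\parDb_1$ may fail to be a valid Young diagram but the inequality $2k/N \geq 1$ makes the claim vacuous.
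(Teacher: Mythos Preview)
Your argument is correct and follows essentially the same route as the paper: both apply the hook-length formula, observe that hook lengths below the first row are unchanged, telescope the first-row hooks in the columns beyond $\parD$ to the factor $N-2k$ (the paper packages this as $(N-2k)!$ in a closed formula for $\dim\parDb$), and bound the remaining product over columns $1,\ldots,\parD(1)$ below by $1$. The only cosmetic difference is that the paper splits the first-row product at $j=k$ (with the convention $\parD^{\top}(j)=0$ for $j>\parD(1)$) rather than at $j=\parD(1)$, and does not separately discuss the vacuous case $N\le 2k$.
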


\begin{proof}
Recall the hook-length formula \refeqn{hook}. 
As $\parD$ has $\parD(1)\leq k$ columns, define $\parD^\top(j)=0$ for all $j\in[\parD(1)+1,k]$. We have
\begin{equation}\label{eqn:dimParD}
\dim\parDb = \frac{N!}{h((N-k,\parD))}
=\frac{N!/(N-2k)!}{h(\parD)\prod_{j=1}^k(N-k+1-j+\parD^\top(j))},
\end{equation}
and therefore
\[
1-\frac{\dim\parDb_1}{\dim\parDb} = 1-\frac{(N-1)!/(N-2k-1)!}{N!/(N-2k)!}
\prod_{j=1}^k\frac{N-k+1-j+\parD^\top(j)}{N-k-j+\parD^\top(j)}  < 1-\frac{N-2k}{N} = \frac{2k}{N}.
\]
\end{proof}

 For $\parE'\neq \parE$, we have
\(
(\Delta_1\diamond(\Gamma_{\parE'})_{1,2})^*(\Delta_1\diamond(\Gamma_\parE)_{1,2})=0,
\)
therefore, by summing \refeqn{GammaLambda} over all $\parE\vdash k$, we get
\begin{align}
\notag
&
  \sum_{\pi\in R'}V_\pi(\Delta_1\diamond (\Gamma_k)_{1,2})^*(\Delta_1\diamond (\Gamma_k)_{1,2})V_{\pi^{-1}}
\\ & \hspace{25pt}\notag
\preceq
\frac{1}{N-o(N)}
\sum_{\parE\vdash k}\sum_{\parD>\parE} \Pi^\parDb_{\parEb_1}
+ 
(N-1)
\sum_{\parD\vdash k+1}
\sum_{\parE<\parD}
\frac{\dim\parEb_{12}}{\dim\parDb_{1}}
\Pi^\parDb_{\parDb_1}
\\ & \hspace{50pt} \label{eqn:GammaK}
\preceq
\frac{1}{N-o(N)}
\sum_{\parE\vdash k}\sum_{\parD>\parE} \Pi^\parDb_{\parEb_1}
+ 
2(k+1)
\sum_{\parD\vdash k+1}
\Pi^\parDb_{\parDb_1},
\end{align}
where the first inequality holds because $\sum_{\parE\vdash k}\sum_{\parD>\parE}$ and $\sum_{\parD\vdash k+1}\sum_{\parE<\parD}$ 
are sums over the same pairs of $\parE$ and $\parD$,
and the second inequality holds because $\dim\parDb_1=\dim\parDb_{12}+\sum_{\parE<\parD}\dim\parEb_{12}$ (due to the branching rule) and \refclm{dimFraction}.

Finally, by summing \refeqn{GammaK} over $k$, we get
\begin{multline}\label{eqn:GammaKMult}
(\Delta_1\diamond\Gamma')^*(\Delta_1\diamond\Gamma') =
  \sum_{\pi\in R'}V_\pi(\Delta_1\diamond \Gamma_{1,2})^*(\Delta_1\diamond \Gamma_{1,2})V_{\pi^{-1}}
\\
\preceq
\sum_{k=0}^{N^{2/3}} \frac{(N^{2/3}-k)^2}{N^2}\bigg(
\frac{1}{N-o(N)}
\sum_{\parE\vdash k}\sum_{\parD>\parE} \Pi^\parDb_{\parEb_1}
+ 
2(k+1)
\sum_{\parD\vdash k+1}
\Pi^\parDb_{\parDb_1}
\bigg)\preceq\I/3.
\end{multline}
Hence, $\|\Delta_1\circ\Gamma'\|\in\O(1)$.
(Note: the norm of \refeqn{GammaK} is $\Theta(k)$ and, in \refeqn{GammaKMult}, we essentially multiply it with $\cT^2/N^2$, where $\cT$ is the intended lower bound. 
This provides an intuition for why one cannot prove a lower bound higher than $\Omega(N^{2/3})$.)

\subsection{\texorpdfstring
{Bounding $\|\Delta_1\circ\Gamma''\|$}
{Bounding the second part}}

Let us decompose the adversary matrix as $\Gamma=2\Gamma_\cA-\Gamma_\cB$, where we define $\Gamma_\cA$ and $\Gamma_\cB$ via their restriction to the rows labeled by $x\in\cD_{1,2}$:  
\[
  (\Gamma_\cA)_{1,2} = \sum_{k=0}^{N^{2/3}}
 \frac{N^{2/3}-k}{N}
   \sum_{\parE\vdash k} \Pi_{\parEb_{12},\parEb_1}\Pi_{id} 
\qqAnd
  (\Gamma_\cB)_{1,2} = \sum_{k=0}^{N^{2/3}}
 \frac{N^{2/3}-k}{N}
   \sum_{\parE\vdash k}  \Pi^\parEb_{\parEb_{12}},
\]
respectively. We show that $\|\Delta_1\circ\Gamma''_\cA\|\in\O(1)$ and $\|\Delta_1\circ\Gamma''_\cB\|\in\O(1)$, which together imply $\|\Delta_1\circ\Gamma''\|\in\O(1)$.
The argument is very similar for both $\Gamma_\cA$ and $\Gamma_\cB$, and let us start by showing $\|\Delta_1\circ\Gamma''_\cA\|\in\O(1)$.

We are interested to see how $\Delta_3$ acts on $(\Gamma_\cA)_{1,2}$. Let $\parF<\parE$, and
we will have to consider $\Pi_{\parFb_{123},\parEb_{12},\parEb_1}$.
For every $\parA>\parEb_1$, note that $V_{(23)}$ and $\PiA_{\parFb_{123},\parEb_1}$ commute. So, similarly to \refeqn{V12rest}, we have
\[
V_{(23)}\Pi_{\parFb_{123},\parEb_1}
= \frac{1}{d_{\parEb_1,\parFb_{123}}}\sum_{\parA>\parEb_1}\Big(
\PiA_{\parFb_{123},\parEb_{12},\parEb_1}
-
\PiA_{\parFb_{123},\parFb_{12},\parEb_1}
+
\sqrt{d_{\parEb_1,\parFb_{123}}^2-1}  \,
\PiA_{\parFb_{123},\parEb_{12},\parEb_1\leftrightarrow\parFb_{123},\parFb_{12},\parEb_1}
\Big).
\]
Hence
\[
\frac{\Tr{[
\PiA_{\parFb_{123},\parEb_{12},\parEb_1}
\PiA_{\parFb_{123},\parEb_{13},\parEb_1}
]}}{\dim\parFb_{123}\dim\parA}
=
\frac{\Tr{[
\PiA_{\parFb_{123},\parEb_{12},\parEb_1} V_{(23)}
\PiA_{\parFb_{123},\parEb_{12},\parEb_1} V_{(23)}
]}}{\dim\parFb_{123}\dim\parA}
=\frac{1}{d_{\parEb_1,\parFb_{123}}^2},
\]
and therefore, similarly to \refeqn{ThetaDec1}, we have
\begin{multline} \label{eqn:ThetaDec2}
\Pi_{\parFb_{123},\parEb_{12},\parEb_1}
=
\Pi_{\parFb_{123},\parFb_{13},\parEb_1}
+
\frac{1}{d_{\parEb_1,\parFb_{123}}^2}
\big(
\Pi_{\parFb_{123},\parEb_{13},\parEb_1} -
\Pi_{\parFb_{123},\parFb_{13},\parEb_1}
\big)
+\frac {\sqrt{d_{\parEb_1,\parFb_{123}}^2-1}}
 {d_{\parEb_1,\parFb_{123}}^2}
\Pi_{\parFb_{123},\parFb_{13},\parEb_1\leftrightarrow\parFb_{123},\parEb_{13},\parEb_1},
\end{multline}
where
\[
\Pi_{\parFb_{123},\parFb_{13},\parEb_1\leftrightarrow\parFb_{123},\parEb_{13},\parEb_1} =
\sum\nolimits_{\parA>\parEb_1} \PiA_{\parFb_{123},\parFb_{13},\parEb_1\leftrightarrow\parFb_{123},\parEb_{13},\parEb_1}
\]
for short.

Without loss of generality, let us assume $N^{2/3}$ to be an integer. Then,
by using the branching rule and simple derivations, one can see that
\begin{equation} \label{eqn:SumMove}
\sum_{k=0}^{N^{2/3}-1}\frac{N^{2/3}-k}{N}
\sum_{\parE\vdash k}
\bigg(
\Pi_{\parEb_{123},\parEb_{1}}
+ \sum_{\parF<\parE}
\Pi_{\parFb_{123},\parFb_{13},\parEb_{1}}
\bigg)
=
 \sum_{k=0}^{N^{2/3}-1}
\bigg(\frac{1}{N}
\sum_{\parE\vdash k}
\Pi_{\parEb_{123},\parEb_{1}}
+
\frac{N^{2/3}-k}{N}
\sum_{\parF\vdash k-1}
\Pi_{\parFb_{123},\parFb_{13}}
\bigg).
\end{equation}
%
Therefore we have
\begin{align}
\notag
(\Gamma_\cA)_{1,2} & =
\sum_{k=0}^{N^{2/3}-1}\frac{N^{2/3}-k}{N}
\sum_{\parE\vdash k}
\Big(
\Pi_{\parEb_{123},\parEb_{1}} + \sum_{\parF<\parE}
\Pi_{\parFb_{123},\parEb_{12},\parEb_{1}}
\Big) \Pi_{id}
\\ & = \notag
\sum_{k=0}^{N^{2/3}-1}
\Bigg(
\frac{1}{N}
\sum_{\parE\vdash k}
\Pi_{\parEb_{123},\parEb_1}
+
\frac{N^{2/3}-k}{N}
\sum_{\parE\vdash k} \sum_{\parF<\parE}
\Big(
\frac {1} {d_{\parEb_1,\parFb_{123}}^2}
(\Pi_{\parFb_{123},\parEb_{13},\parEb_1}
- \Pi_{\parFb_{123},\parFb_{13},\parEb_1})
\\ & \hspace{70pt}
+
\frac {\sqrt{d_{\parEb_1,\parFb_{123}}^2-1}}{d_{\parEb_1,\parFb_{123}}^2}
\Pi_{\parFb_{123},\parFb_{13},\parEb_1 \leftrightarrow \parFb_{123},\parEb_{13},\parEb_1}
 \Big)
+
\frac{N^{2/3}-k}{N}
\sum_{\parF\vdash k-1}
\Pi_{\parFb_{123},\parFb_{13}}
 \Bigg)\Pi_{id},
\notag
\end{align}
where the first equality comes from the branching rule and the fact that we can ignore $k=N^{2/3}$,
and the second equality comes from subsequent applications of \refeqn{ThetaDec2} and \refeqn{SumMove}.

Recall that the action of $\Delta_3$ commutes with $\Pi_{id}$ and $\Delta_3\circ\Pi_{\parFb_{123},\parFb_{13}}=0$. Therefore we can choose
\begin{multline*}
\Delta_3\diamond(\Gamma_\cA)_{1,2} 
 =
\sum_{k=0}^{N^{2/3}-1}
\Bigg(
\frac{1}{N}
\sum_{\parE\vdash k}
\Pi_{\parEb_{123},\parEb_1}
+
\frac{N^{2/3}-k}{N}
\sum_{\parE\vdash k} \sum_{\parF<\parE}
\Big(
\frac {1} {d_{\parEb_1,\parFb_{123}}^2}
(\Pi_{\parFb_{123},\parEb_{13},\parEb_1}
- \Pi_{\parFb_{123},\parFb_{13},\parEb_1})
\\ 
+
\frac {\sqrt{d_{\parEb_1,\parFb_{123}}^2-1}}{d_{\parEb_1,\parFb_{123}}^2}
\Pi_{\parFb_{123},\parFb_{13},\parEb_1 \leftrightarrow \parFb_{123},\parEb_{13},\parEb_1}
 \Big)
 \Bigg)\Pi_{id},
\end{multline*}
and we have
\begin{align}
& (\Delta_3\diamond(\Gamma_\cA)_{1,2})^*
(\Delta_3\diamond(\Gamma_\cA)_{1,2}) 
\notag
\\ & \hspace{25pt} =
\sum_{k=0}^{N^{2/3}-1}
\Pi_{id}\Bigg(
\frac{1}{N^2}
\sum_{\parE\vdash k}
\Pi_{\parEb_{123},\parEb_1}
 +
\frac{(N^{2/3}-k)^2}{N^2}
\sum_{\parE\vdash k} \sum_{\parF<\parE}
\frac {1} {d_{\parEb_1,\parFb_{123}}^2}
\big(\Pi_{\parFb_{123},\parEb_{13},\parEb_1}
+ \Pi_{\parFb_{123},\parFb_{13},\parEb_1}\big)
 \Bigg)\Pi_{id}, \notag
\\ & \hspace{50pt} \preceq \frac{1}{N^2}
\sum_{k=0}^{N^{2/3}-1}
\Pi_{id}\Bigg(
\sum_{\parE\vdash k}
\Pi_{\parEb_{123},\parEb_1}
 +
o(1)\cdot
\sum_{\parE\vdash k} \sum_{\parF<\parE}
\big(\Pi_{\parFb_{123},\parEb_{13},\parEb_1}
+ \Pi_{\parFb_{123},\parFb_{13},\parEb_1}\big)
 \Bigg)\Pi_{id} \preceq \frac{1}{N^2}\I.
\notag
\end{align}
Finally, \refeqn{Half2Norm} tells us that
\[
\|\Delta_1\diamond \Gamma''_\cA\|^2 
=
\Big\|\sum_{\pi\in R''} V_{\pi} (\Delta_3\diamond(\Gamma_\cA)_{1,2})^*
(\Delta_3\diamond(\Gamma_\cA)_{1,2})  V_{\pi^{-1}}\Big\|
\leq \Big\|\sum_{\pi\in R''} \frac{1}{N^2}\I\,\Big\| \leq 1/2,
\]
and, hence, $\|\Delta_1\circ \Gamma''_\cA\|\in\O(1)$.

We show that $\|\Delta_1\circ \Gamma''_\cB\|\in\O(1)$ in essentially the same way, except now, instead of the decomposition \refeqn{ThetaDec2} of $\Pi_{\parFb_{123},\parEb_{12},\parEb_1}$ we consider the decomposition \refeqn{ThetaDec1} of $\PiEb_{\parFb_{123},\parEb_{12}}$. This concludes the proof that $\|\Delta_1\circ \Gamma''\|\in\O(1)$, which, in turn, concludes the proof of \refthm{main}.

\section{Open problems} \label{sec:open}

We already mentioned two open problems in the introduction. One is to close the gap between the best known lower bound and upper bound for {\sc $k$-Distinctness}, $\Omega(N^{2/3})$ and $\O(N^{1-2^{k-2}/(2^k-1)})$, respectively. We hope that our lower bound for {\sc Element Distinctness} could help to improve the lower bound for {\sc $k$-Distinctness} when $k\geq 3$.

The other is to reduce the required group (i.e., alphabet) size in the $\Omega(N^{k/(k+1)})$ lower bound for {\sc $k$-Sum}. As pointed out in \cite{spalek:kSumLower}, the quantum query complexity of {\sc $k$-Sum} becomes $\O(\sqrt{N})$ for groups of constant size. Therefore it would be interesting to find tradeoffs between the quantum query complexity and the size (and, potentially, the structure) of the group. These tradeoffs might be relatively smooth, unlike the jump in the query complexity of {\sc Element Distinctness} between alphabet sizes $N-1$ and $N$. 

Claims \ref{clm:GammaTransp} and \ref{clm:alphas} suggest that the adversary matrix that we  consider in \refthm{main} for {\sc Element Distinctness} is a natural choice.
While any other optimal adversary matrix probably cannot look too different (in terms of the singular value decomposition), it does not mean that it cannot have a simpler specification. Such a simpler specification might facilitate the construction of adversary bounds for other problems.

In fact, Belovs' construction \cite{belovs:adv-el-dist} gives an adversary matrix $\Gamma$ for {\sc Element Distinctness} for any alphabet size. Unfortunately, his analysis for lower bounding $\|\Gamma\|/\|\Delta_i\circ\Gamma\|$ does not work any more for alphabet sizes $o(N^2)$. Nonetheless, it still might be the case that $\|\Gamma\|/\|\Delta_i\circ\Gamma\|\in \Omega(N^{2/3})$ even when $|\Sigma|=N$, and, if one could show that, it might help to provide tight adversary bounds for {\sc Collision} and {\sc Set Equality} with minimal non-trivial alphabet size, because the current adversary  bounds for them are constructed similarly to Belovs's adversary bound for {\sc Element Distinctness} and require $|\Sigma|\in\Omega(N^2)$. (We know that such adversary bounds for {\sc Collision} and {\sc Set Equality} exist due to tight lower
bounds via other methods \cite{shi:collisionLower,kutin:collisionLower,zhandry:set-equal} and the optimality
of the adversary method~\cite{lee:stateConversion}.)

Jeffery, Magniez, and de Wolf recently studied the model of parallel quantum query algorithms, which can make $P$ queries in parallel in each timestep \cite{jeffery:parallel-complexity}. 
They show that such algorithms have to make $\Theta((N/P)^{2/3})$ $P$-parallel quantum queries to solve {\sc Element Distinctness}. For the lower bound, they generalize the adversary bound given in \cite{belovs:nonAdaptiveLG} (which is almost equivalent to one in \cite{belovs:adv-el-dist}) and therefore require that the alphabet size is at least $\Omega(N^2)$. The techniques provided in this paper might help to remove this requirement.

\section*{Acknowledgments}

The author would like to thank Andris Ambainis, Aleksandrs Belovs, Robin Kothari, Hari Krovi, Abel Molina, and John Watrous for fruitful discussions and useful comments.
A large portion of the research described in this paper was conducted during author's visit to the University of Latvia.
The author acknowledges the support of Mike and Ophelia Lazaridis Fellowship, David R. Cheriton Graduate Scholarship, and the US ARO.


\begin{thebibliography}{AMRR11}

\bibitem[Amb02]{ambainis:adv}
Andris Ambainis.
\newblock Quantum lower bounds by quantum arguments.
\newblock {\em Journal of Computer and System Sciences}, 64(4):750--767, 2002.

\bibitem[Amb03]{ambainis:polVsQCC}
Andris Ambainis.
\newblock Polynomial degree vs. quantum query complexity.
\newblock In {\em Proc. of 44th IEEE FOCS}, pages 230--239, 2003.

\bibitem[Amb05]{ambainis:collisionLower}
Andris Ambainis.
\newblock Polynomial degree and lower bounds in quantum complexity: Collision
  and element distinctness with small range.
\newblock {\em Theory of Computing}, 1:37--46, 2005.

\bibitem[Amb07]{ambainis:distinctness}
Andris Ambainis.
\newblock Quantum walk algorithm for element distinctness.
\newblock {\em SIAM Journal on Computing}, 37(1):210--239, 2007.

\bibitem[AMRR11]{ambainis:symmetryAssisted}
Andris Ambainis, Lo{\"\i}ck Magnin, Martin Roetteler, and {J{\'e}r{\'e}mie}
  Roland.
\newblock Symmetry-assisted adversaries for quantum state generation.
\newblock In {\em Proc. of 26th IEEE Complexity}, pages 167--177, 2011.

\bibitem[AS04]{shi:collisionLower}
Scott Aaronson and Yaoyun Shi.
\newblock Quantum lower bounds for the collision and the element distinctness
  problems.
\newblock {\em Journal of the ACM}, 51(4):595--605, 2004.

\bibitem[BBC{\etalchar{+}}01]{beals:pol}
Robert Beals, Harry Buhrman, Richard Cleve, Michele Mosca, and Ronald de~Wolf.
\newblock Quantum lower bounds by polynomials.
\newblock {\em Journal of the ACM}, 48(4):778--797, 2001.

\bibitem[BDH{\etalchar{+}}05]{buhrman:distinctness}
Harry Buhrman, Christoph {D\"urr}, Mark Heiligman, Peter H{\o}yer,
  {Fr\'ed\'eric} Magniez, Miklos Santha, and Ronald de~Wolf.
\newblock Quantum algorithms for element distinctness.
\newblock {\em SIAM Journal on Computing}, 34(6):1324--1330, 2005.

\bibitem[Bel12a]{belovs:adv-el-dist}
Aleksandrs Belovs.
\newblock Adversary lower bound for element distinctness.
\newblock 2012.
\newblock {\em Available at} \href{http://arxiv.org/abs/1204.5074}{arXiv:1204.5074}.

\bibitem[Bel12b]{belovs:learningKDist}
Aleksandrs Belovs.
\newblock Learning-graph-based quantum algorithm for $k$-distinctness.
\newblock In {\em Proc. of 53rd IEEE FOCS}, pages 207--216, 2012.

\bibitem[BHT98]{brassard:collision}
Gilles Brassard, Peter H{\o}yer, and Alain Tapp.
\newblock Quantum cryptanalysis of hash and claw-free functions.
\newblock In {\em Proc. of 3rd LATIN}, volume 1380 of {\em LNCS}, pages
  163--169. Springer, 1998.

\bibitem[BR13]{belovs:nonAdaptiveLG}
Aleksandrs Belovs and Ansis Rosmanis.
\newblock On the power of non-adaptive learning graphs.
\newblock In {\em Proc. of 28th IEEE Complexity}, pages 44--55, 2013.

\bibitem[BR14]{rosmanis:collision}
Aleksandrs Belovs and Ansis Rosmanis.
\newblock Adversary lower bounds for the collision and the set equality
  problems.
\newblock 2014.
\newblock {\em Available at} \href{http://arxiv.org/abs/1310.5185}{arXiv:1310.5185}.

\bibitem[B{\v S}12]{spalek:kSumLower}
Aleksandrs Belovs and Robert {\v S}palek.
\newblock Adversary lower bound for the {$k$-sum} problem.
\newblock In {\em Proc. of 4th ACM ITCS}, pages 323--328, 2012.

\bibitem[God05]{godsil:assoc1}
Chris Godsil.
\newblock Association schemes.
\newblock Lecture Notes, 2005.
\newblock {\em Available at }\,\url{http://quoll.uwaterloo.ca/mine/Notes/assoc1.pdf}

\bibitem[HL{\v S}07]{hoyer:advNegative}
Peter H{\o}yer, Troy Lee, and Robert {\v S}palek.
\newblock Negative weights make adversaries stronger.
\newblock In {\em Proc. of 39th ACM STOC}, pages 526--535, 2007.

\bibitem[JK81]{james:symmetric}
Gordon James and Adalbert Kerber.
\newblock {\em The Representation Theory of the Symmetric Group}, volume~16 of
  {\em Encyclopedia of Mathematics and its Applications}.
\newblock Addison-Wesley Publishing Company, 1981.

\bibitem[JMdW13]{jeffery:parallel-complexity}
Stacey Jeffery, {Fr\'ed\'eric} Magniez, and Ronald de~Wolf.
\newblock Optimal parallel quantum query algorithms.
\newblock 2013.
\newblock {\em Available at} \href{http://arxiv.org/abs/1309.6116}{arXiv:1309.6116}.

\bibitem[Kut05]{kutin:collisionLower}
Samuel Kutin.
\newblock Quantum lower bound for the collision problem with small range.
\newblock {\em Theory of Computing}, 1(1):29--36, 2005.

\bibitem[LMR{\etalchar{+}}11]{lee:stateConversion}
Troy Lee, Rajat Mittal, Ben~W. Reichardt, Robert {\v S}palek, and Mario
  Szegedy.
\newblock Quantum query complexity of the state conversion problem.
\newblock In {\em Proc. of 52nd IEEE FOCS}, pages 344--353, 2011.

\bibitem[Mid04]{midrijanis:setEqual}
Gatis Midrij{\=a}nis.
\newblock A polynomial quantum query lower bound for the set equality problem.
\newblock In {\em Proc. of 31th ICALP}, volume 3142 of {\em LNCS}, pages
  996--1005. Springer, 2004.

\bibitem[Rei11]{reichardt:advTight}
Ben~W. Reichardt.
\newblock Reflections for quantum query algorithms.
\newblock In {\em Proc. of 22nd ACM-SIAM SODA}, pages 560--569, 2011.

\bibitem[Sag01]{sagan:symmetric}
Bruce~E. Sagan.
\newblock {\em The Symmetric Group}, volume 203 of {\em Graduate Texts in
  Mathematics}.
\newblock Springer, 2001.

\bibitem[Ser77]{serre:representation}
Jean-Pierre Serre.
\newblock {\em Linear Representations of Finite Groups}, volume~42 of {\em
  Graduate Texts in Mathematics}.
\newblock Springer, 1977.

\bibitem[Sho97]{shor:factorization}
Peter~W. Shor.
\newblock Polynomial-time algorithms for prime factorization and discrete
  logarithms on a quantum computer.
\newblock {\em SIAM Journal on Computing}, 26(5):1484--1509, 1997.

\bibitem[Sim97]{simon:powerOfQuantum}
Daniel~R. Simon.
\newblock On the power of quantum computation.
\newblock {\em SIAM Journal on Computing}, 26(5):1474--1483, 1997.

\bibitem[{\v S}pa13]{spalek:adv-array}
Robert {\v S}palek.
\newblock Adversary lower bound for the orthogonal array problem.
\newblock 2013.
\newblock {\em Available at} \href{http://arxiv.org/abs/1304.0845}{arXiv:1304.0845}.

\bibitem[{\v S}S06]{spalek:advEquivalent}
Robert {\v S}palek and Mario Szegedy.
\newblock All quantum adversary methods are equivalent.
\newblock {\em Theory of Computing}, 2:1--18, 2006.

\bibitem[Zha05]{zhang:advPower}
Shengyu Zhang.
\newblock On the power of {A}mbainis lower bounds.
\newblock {\em Theoretical Computer Science}, 339(2):241--256, 2005.

\bibitem[Zha13]{zhandry:set-equal}
Mark Zhandry.
\newblock A note on the quantum collision and set equality problems.
\newblock 2013.
\newblock {\em Available at} \href{http://arxiv.org/abs/1312.1027}{arXiv:1312.1027}.

\end{thebibliography}

\newcommand{\etalchar}[1]{$^{#1}$}

\appendix

\section{\texorpdfstring
{Necessary conditions for the construction of $\Gamma$}
{Necessary conditions for the construction of the adversary matrix}
}\label{app:necessary}

\subsection{\texorpdfstring
{Action of $\Delta_i$ on $\Pi^\parA_\parA$ and transporters}
{Action of the difference matrix on projectors and transporters}}

Let us consider $i\neq 2$.
Recall the projectors $\hat\Pi^s_i$ from \refsec{DeltaComm}, 
and note that $V_\pi^\tau\hat\Pi^s_i=\hat\Pi^s_i V_\pi^\tau$ for all $(\pi,\tau)\in\S_{[N]\setminus\{i\}}\times\S_{\Sigma\setminus\{s\}}$. Analogously to \refclm{VDecomp},
\[
\hat\Pi^s_i=\sum\nolimits_{\parB\vdash N-1}\hat\Pi^{s,\parB_s}_{i,\parB_i},
\]
where $\hat\Pi^{s,\parB_s}_{i,\parB_i}=\hat\Pi^s_i\Pi^{\parB_s}_{\parB_i}=\Pi^{\parB_s}_{\parB_i}\hat\Pi^s_i$ projects on a single instance of the irrep $\parB\times\parB$ of $\S_{[N]\setminus\{i\}}\times\S_{\Sigma\setminus\{s\}}$.

Due to the symmetry, $V_\pi^\tau(\overline\Delta_i\circ\PiA_\parA)=(\overline\Delta_i\circ\PiA_\parA) V_\pi^\tau$ for all $(\pi,\tau)\in\S_{[N]\setminus\{i\}}\times\S_\Sigma$, therefore we can express
\[
\overline\Delta_i\circ\PiA_\parA = \sum_{\parA'\vdash N}\sum_{\parB<\parA'} \phi^{\parA'}_\parB \Pi^{\parA'}_{\parB_i}.
\]
We have
\begin{multline*}
\phi^{\parA'}_\parB
= \frac{\Tr[(\overline\Delta_i\circ\PiA_\parA)\Pi^{\parA'}_{\parB_i}]}{\Tr[\Pi^{\parA'}_{\parB_i}]} 
= \frac{\Tr[\sum_{s\in\Sigma}\hat\Pi^s_i\PiA_\parA\hat\Pi^s_i\Pi^{\parA'}_{\parB_i}]}{\dim\parA'\dim\parB}
= N\frac{\Tr[\hat\Pi^{s,\parB_s}_{i,\parB_i}\PiA_{\parB_i}
                    \hat\Pi^{s,\parB_s}_{i,\parB_i} \Pi^{\parA'}_{\parB_i}]}
{\dim\parA'\dim\parB}
= N\frac{\Tr[\hat\Pi^{s,\parB_s}_{i,\parB_i}\PiA_{\parB_i}]
              \cdot\Tr[\hat\Pi^{s,\parB_s}_{i,\parB_i} \Pi^{\parA'}_{\parB_i}]}
{\dim\parA'(\dim\parB)^3}
\\
= N\frac{\Tr[\hat\Pi^s_i\PiA_{\parB_i}]
              \cdot\Tr[\hat\Pi^s_i \Pi^{\parA'}_{\parB_i}]}
{\dim\parA'(\dim\parB)^3}
= \frac{\Tr[\PiA_{\parB_i}]
              \cdot\Tr[\Pi^{\parA'}_{\parB_i}]}
{N\dim\parA'(\dim\parB)^3}
=
\begin{cases}
\frac{\dim\parA}{N\dim\parB}, & \text{if }\parB<\parA,
\\
0, & \text{if }\parB\not<\parA\text{ (i.e., $\PiA_{\parB_i}=0$)},
\end{cases}
\end{multline*}
where the second equality is due to \refeqn{DeltaViaProj}, 
the third and sixth equalities are due to the symmetry among all $s\in\Sigma$,
and the fourth equality is from \cite{ambainis:symmetryAssisted}.
Hence
\begin{equation}
\label{eqn:DeltaOnProj}
\Delta_i\circ\PiA_\parA
= \PiA_\parA-\frac{\dim\parA}{N}\sum_{\parB<\parA} 
\Big(
\frac{1}{\dim\parB} \sum_{\parA'>\parB} \Pi^{\parA'}_{\parB_i}
\Big)
= \PiA_\parA-\frac{\dim\parA}{N}\sum_{\parB<\parA} 
\Big(
\frac{1}{\dim\parB} \Pi_{\parB_i}
\Big).
\end{equation}

Now consider $j\neq i$, $\parA\vdash N$, and $\parC\ll_{rc}\parA$.
Let $\parB,\parB'\vdash N-1$ be such that $\parC<\parB<\parA$, $\parC<\parB'<\parA$, and $\parB\neq\parB'$.
Let us see how $\overline\Delta_i$ acts on the transporter $\Pi^\parA_{\parC_{ij},\parB'_i\leftarrow\parC_{ij},\parB_i}$. We have
\[
\hat\Pi^s_i  \Pi^\parA_{\parC_{ij},\parB'_i\leftarrow\parC_{ij},\parB_i}  \hat\Pi^s_i
=
\hat\Pi^s_i
\Pi^{\parB'_s}_{\parB'_i}
\Pi^\parA_{\parC_{ij},\parB'_i\leftarrow\parC_{ij},\parB_i}
\Pi^{\parB_s}_{\parB_i}
\hat\Pi^s_i = 0
\]
because $\Pi^{\parB'_s}\Pi^\parA_{\parC_{ij}, \parB'_i\leftarrow\parC_{ij},\parB_i}$ is a transporter between two instances of the irrep $\parC\times\parB'$ of $\S_{[N]\setminus\{i,j\}}\times\S_{\Sigma\setminus\{s\}}$ and, therefore, orthogonal to $\Pi^{\parB_s}$.
Hence,
\begin{equation}
\label{eqn:DeltaOnTrans}
\overline\Delta_i\circ \Pi^\parA_{\parC_{ij},\parB'_i\leftarrow\parC_{ij},\parB_i} = 0
\qqAnd
\Delta_i\circ \Pi^\parA_{\parC_{ij},\parB'_i\leftarrow\parC_{ij},\parB_i} =
\Pi^\parA_{\parC_{ij},\parB'_i\leftarrow\parC_{ij},\parB_i}.
\end{equation}

\subsection{\texorpdfstring
{Necessary conditions for $\|\Delta_1\circ\Gamma\|\in\O(1)$}
{Necessary conditions}}

We will use the following lemmas and corollaries in the proof of \refclm{alphas}. Let $\Gamma_{1,2}$ be given as in \refeqn{Gamma12Blocks}, and $\Gamma$ be obtained from $\Gamma_{1,2}$ via \refeqn{GammaViaBlocks}.

\begin{lem} \label{lem:helper1}
Consider $\parA\vdash N$, $\parB<\parA$, $\parB'<\parA$, and $\parC<\parB,\parB'$ (we allow $\parB=\parB'$ here). If $\|\Delta_1\circ\Gamma'\|\leq1$, then
\[
 \|\PiA_{\parC_{12},\parB_1}(\Delta_1\circ\Gamma_{1,2})\PiA_{\parC_{12},\parB'_1}\|
\leq\sqrt\frac{\dim\parB'}{(N-1)\dim\parC}.
\]
\end{lem}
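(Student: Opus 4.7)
Write $A=\Delta_1\circ\Gamma_{1,2}$, $X=A^*A$, $P=\PiA_{\parC_{12},\parB_1}$, $Q=\PiA_{\parC_{12},\parB'_1}$, and $M=PAQ$. The plan is to bound $\|M\|^2=\|M^*M\|$ by two applications of Schur's lemma, feeding in the hypothesis only at the very end through the identity \refeqn{Half1Norm}, which says $\bigl\|\sum_{\pi\in R'}V_\pi X V_{\pi^{-1}}\bigr\|=\|\Delta_1\circ\Gamma'\|^2\le 1$.

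First I would drop the middle projector $P$ at zero cost in the semidefinite order: $M^*M=QA^*PAQ\preceq QA^*AQ=QXQ$, since $I-P\succeq 0$. Next I would exploit symmetry: $X$ is $\S_{[3,N]}\times\S_\Sigma$-equivariant under conjugation, because $\Gamma_{1,2}$ is by \refeqn{Gamma12Sym} and $\Delta_1$ commutes with every $V_\pi^\tau$ for $(\pi,\tau)\in\S_{[N]\setminus\{1\}}\times\S_\Sigma$; the projector $Q$ is invariant for the same reason. The subspace $Q\cY$ carries a single copy of the irrep $\parC\times\parA$ of $\S_{[3,N]}\times\S_\Sigma$, so Schur's lemma forces $QXQ=\mu Q$ with $\mu=\Tr[XQ]/(\dim\parC\cdot\dim\parA)$. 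This gives $\|M\|^2\le\mu$, reducing everything to an upper bound on $\Tr[XQ]$.

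Since $Q\preceq\PiA_{\parB'_1}$ and $X\succeq 0$, I have $\Tr[XQ]\le\Tr[X\PiA_{\parB'_1}]$. To control the latter I would apply Schur's lemma a second time, but now for the larger group $\S_{[2,N]}\times\S_\Sigma$ acting on $\PiA_{\parB'_1}\cY$, which is a single irrep $\parB'\times\parA$. The averaged operator $S=\sum_{\pi\in R'}V_\pi XV_{\pi^{-1}}$ is $\S_{[2,N]}\times\S_\Sigma$-invariant: for $g\in\S_{[2,N]}$, conjugation by $V_g$ replaces the transversal $R'$ by $gR'$, and since $X$ commutes with the $\S_{[3,N]}$-action the excess factors can be absorbed back into $R'$ to recover the same sum. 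Hence $S|_{\PiA_{\parB'_1}\cY}=\lambda\,\PiA_{\parB'_1}$ with $\lambda\le\|S\|\le 1$. Taking traces against $\PiA_{\parB'_1}$ and using that $V_\pi$ commutes with $\PiA_{\parB'_1}$ for $\pi\in\S_{[2,N]}$, I get $(N-1)\Tr[X\PiA_{\parB'_1}]=\lambda\dim\parB'\dim\parA\le\dim\parB'\dim\parA$.

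Combining these yields
\[
\|M\|^2 \;\le\; \mu \;\le\; \frac{\Tr[X\PiA_{\parB'_1}]}{\dim\parC\cdot\dim\parA} \;\le\; \frac{\dim\parB'}{(N-1)\dim\parC},
\]
and taking the square root gives the claim. The only subtle step is the $\S_{[2,N]}$-invariance of $S$; once it is in place, the rest is parallel to the Schur-type averaging already carried out for \refeqn{PermSum1}.
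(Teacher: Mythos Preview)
Your proof is correct. Both your argument and the paper's pivot on the same three ingredients: the $\S_{[3,N]}\times\S_\Sigma$-equivariance of $\Delta_1\circ\Gamma_{1,2}$, Schur's lemma on the single-copy subspaces $\PiA_{\parC_{12},\parB'_1}\cY$ and $\PiA_{\parB'_1}\cY$, and the identity \refeqn{Half1Norm} controlling $\|\sum_{\pi\in R'}V_\pi X V_{\pi^{-1}}\|$. The route, however, is different. The paper projects on the \emph{left} with $P=\PiA_{\parC_{12},\parB_1}$ (lifted to $\cX$ as $\Psi^\parA_{\parC,\parB}$), explicitly decomposes $PA=\psi P+\psi'\,\PiA_{\parC_{12},\parB_1\leftarrow\parC_{12},\parB'_1}$, and then uses \refeqn{PermSum1} to compute $(\Delta_1\circ\Gamma')^*\Psi^\parA_{\parC,\parB}(\Delta_1\circ\Gamma')$ exactly; this yields simultaneous bounds on $\psi=\|PAP\|$ and $\psi'=\|PAQ\|$. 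You instead sandwich on the \emph{right} with $Q$, immediately drop $P$ via $QA^*PAQ\preceq QXQ$, and then handle everything with two soft Schur/trace steps, never touching transporters or the explicit formula \refeqn{PermSum1}. Your way is a bit cleaner and avoids the bookkeeping of the decomposition, at the price of losing the companion bound on $\psi$ that the paper's computation gives for free (which it uses in \refcor{helper2}).
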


\begin{proof}
For the proof, let us assume that $\parC\ll_{rc}\parA$ and $\parB\neq\parB'$. It is easy to see that the proof works in all the other cases too. 
Let
\(
\Psi_{\parC,\parB}^\parA=\sum_{\pi\in R'}U_\pi \PiA_{\parC_{12},\parB_1} U_{\pi^{-1}},
\)
where the transversal $R'$ was defined in \refsec{TwoHalves}.
From \refeqn{GammaHalves}, we have
\begin{equation}\label{eqn:Psi1}
\Psi^\parA_{\parC,\parB}(\Delta_1\circ\Gamma')
=
\sum_{\pi\in R'}U_\pi \PiA_{\parC_{12},\parB_1} (\Delta_1\circ\Gamma_{1,2}) V_{\pi^{-1}},
\end{equation}
whose norm is at most $1$ because $\Psi^\parA_{\parC,\parB}$ is a projector.

We can express
\[
\PiA_{\parC_{12},\parB_1}(\Delta_1\circ\Gamma_{1,2}) 
=
\psi \PiA_{\parC_{12},\parB_1}
+
\psi' \PiA_{\parC_{12},\parB_1\leftarrow\parC_{12},\parB'_1},
\]
where
\[
\psi =  \|\PiA_{\parC_{12},\parB_1}(\Delta_1\circ\Gamma_{1,2})\PiA_{\parC_{12},\parB_1}\|
\qqAnd
\psi' =  \|\PiA_{\parC_{12},\parB_1}(\Delta_1\circ\Gamma_{1,2})\PiA_{\parC_{12},\parB'_1}\|.
\]
Hence,
\begin{equation}\label{eqn:Psi2}
(\Delta_1\circ\Gamma_{1,2})^*\PiA_{\parC_{12},\parB_1}(\Delta_1\circ\Gamma_{1,2})
=
\psi^2 \PiA_{\parC_{12},\parB_1}
+
(\psi')^2 \PiA_{\parC_{12},\parB'_1}
+
\psi\psi' \PiA_{\parC_{12},\parB_1\leftrightarrow\parC_{12},\parB'_1}.
\end{equation}
From \refeqn{Psi1}, \refeqn{Psi2}, and \refeqn{PermSum1}, we get
\[
(\Delta_1\circ\Gamma')^*\Psi^\parA_{\parC,\parB}(\Delta_1\circ\Gamma')
=\psi^2(N-1)\frac{\dim\parC}{\dim\parB}\PiA_\parB
+(\psi')^2(N-1)\frac{\dim\parC}{\dim\parB'}\PiA_{\parB'}.
\]
The norm of this matrix is at most $1$, which completes the proof.
\end{proof}

\begin{cor} \label{cor:helper2}
Let $\parC\vdash N-2$, $\parB>\parC$, and $\parA,\parA'>\parB$. If $\|\Delta_1\circ\Gamma'\|\leq 1$, then 
\[
\bigg|
\frac{\Tr[\Pi^{\parA}_{\parC_{12},\parB_1}\Gamma_{1,2}]}{\dim\parA\dim\parC}
-
\frac{\Tr[\Pi^{\parA'}_{\parC_{12},\parB_1}\Gamma_{1,2}]}{\dim\parA'\dim\parC}
\bigg|
\leq 2 \sqrt\frac{\dim\parB}{(N-1)\dim\parC}.
\]
\end{cor}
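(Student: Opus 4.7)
The plan is to apply \reflem{helper1} with $\parB'=\parB$ to both $\parA$ and $\parA'$, then combine the two bounds via the triangle inequality, after identifying the difference of the ``diagonal coefficients'' of $\Delta_1\circ\Gamma_{1,2}$ on the two projectors with the difference of the corresponding coefficients of $\Gamma_{1,2}$ itself. Write $c_\parA=\Tr[\PiA_{\parC_{12},\parB_1}\Gamma_{1,2}]/(\dim\parA\dim\parC)$, so that the quantity to bound is $|c_\parA-c_{\parA'}|$. Since $\Gamma_{1,2}$ intertwines the action of $\S_{[3,N]}\times\S_\Sigma$, Schur's lemma gives $\PiA_{\parC_{12},\parB_1}\Gamma_{1,2}\PiA_{\parC_{12},\parB_1}=c_\parA\PiA_{\parC_{12},\parB_1}$. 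The same argument (noting that $\Delta_1$ is invariant under conjugation by $V_\pi^\tau$ for $(\pi,\tau)\in\S_{[3,N]}\times\S_\Sigma$, and that entrywise product commutes with simultaneous row/column permutation) yields $\PiA_{\parC_{12},\parB_1}(\Delta_1\circ\Gamma_{1,2})\PiA_{\parC_{12},\parB_1}=d_\parA\PiA_{\parC_{12},\parB_1}$ for some scalar $d_\parA$. Then \reflem{helper1} with $\parB'=\parB$ applied to $\parA$ and to $\parA'$ gives $|d_\parA|,|d_{\parA'}|\leq\sqrt{\dim\parB/((N-1)\dim\parC)}$, and the triangle inequality implies $|d_\parA-d_{\parA'}|\leq 2\sqrt{\dim\parB/((N-1)\dim\parC)}$.

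The main step is to prove the identity $d_\parA-d_{\parA'}=c_\parA-c_{\parA'}$. Since the index $1$ appears in both subscripts, $\Delta_1$ commutes with $\Pi_{\parC_{12}}$ and $\Pi_{\parB_1}$ by \refeqn{DeltaComm}; using $\Pi^\parA=\Pi^\parA_\parA$ from \refclm{VDecomp} and \refeqn{DeltaOnProj}, only the $\parB''=\parB$ term in the resulting sum survives multiplication by $\Pi_{\parB_1}$, giving
\[
\Delta_1\circ\PiA_{\parC_{12},\parB_1}=\Pi_{\parC_{12}}\Pi_{\parB_1}(\Delta_1\circ\Pi^\parA)=\PiA_{\parC_{12},\parB_1}-\tfrac{\dim\parA}{N\dim\parB}\Pi_{\parC_{12},\parB_1}.
\]
Pushing $\Pi_{\parC_{12}}\Pi_{\parB_1}$ on either side outside of $\Delta_1\circ$ via the same commutativity, and decomposing $\Pi_{\parC_{12},\parB_1}\Gamma_{1,2}\Pi_{\parC_{12},\parB_1}=\sum_{\parA''>\parB}c_{\parA''}\Pi^{\parA''}_{\parC_{12},\parB_1}$ (by Schur applied to each $\parA''$, since distinct alphabet irreps are orthogonal), one obtains
\[
\PiA_{\parC_{12},\parB_1}(\Delta_1\circ\Gamma_{1,2})\PiA_{\parC_{12},\parB_1}=\Bigl(c_\parA-\tfrac{1}{N\dim\parB}\sum_{\parA''>\parB}c_{\parA''}\dim\parA''\Bigr)\PiA_{\parC_{12},\parB_1},
\]
where the subtracted term is independent of $\parA$. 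Hence $d_\parA-d_{\parA'}=c_\parA-c_{\parA'}$, and combining with the bound on $|d_\parA-d_{\parA'}|$ yields the corollary.

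The only real subtlety is tracking which projectors commute with $\Delta_1$: the superscript $\Pi^\parA$ does not commute (its ``subscript'' as an index-side projector is empty, so the set contains no index $1$), but $\Pi_{\parC_{12}}$ and $\Pi_{\parB_1}$ both do. Once that is handled correctly, the rest of the argument is routine: Schur's lemma to isolate the scalar coefficients $c_\parA$ and $d_\parA$, and the identity \refeqn{DeltaOnProj} already established in the appendix.
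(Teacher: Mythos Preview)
Your proof is correct and follows essentially the same approach as the paper: apply \reflem{helper1} with $\parB'=\parB$, use \refeqn{DeltaOnProj} together with the commutativity \refeqn{DeltaComm} to show that the diagonal coefficient $d_\parA$ of $\Delta_1\circ\Gamma_{1,2}$ differs from $c_\parA$ by a term independent of $\parA$, and conclude by the triangle inequality. The only cosmetic difference is that the paper moves $\Delta_1$ across the trace via the identity $\Tr[A(\Delta_1\circ B)]=\Tr[(\Delta_1\circ A)B]$, whereas you pull $\Pi_{\parC_{12}}\Pi_{\parB_1}$ through $\Delta_1\circ$ and expand $\Pi_{\parC_{12},\parB_1}\Gamma_{1,2}\Pi_{\parC_{12},\parB_1}$ explicitly; both routes yield the same $\parA$-independent correction.
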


\begin{proof}
From \reflem{helper1}, we have
\begin{multline*}
 \big\|\PiA_{\parC_{12},\parB_1}(\Delta_1\circ\Gamma_{1,2})\PiA_{\parC_{12},\parB_1}\big\|
=
\frac{\big|\Tr\big[\PiA_{\parC_{12},\parB_1}(\Delta_1\circ\Gamma_{1,2})\big]\big|}
{ \dim\parA\dim\parC }
= 
\frac{\big|\Tr\big[(\Delta_1\circ\PiA_{\parC_{12},\parB_1})\Gamma_{1,2}\big]\big|}
{ \dim\parA\dim\parC }
\\
=
\frac{\big|\Tr\big[(\PiA_{\parC_{12},\parB_1}-\frac{\dim\parA}{N\dim\parB}\Pi_{\parC_{12},\parB_1})\Gamma_{1,2}\big]\big|}
{ \dim\parA\dim\parC }
=
\bigg|
\frac{\Tr\big[\PiA_{\parC_{12},\parB_1}\Gamma_{1,2}\big]}{\dim\parA\dim\parC}
-\frac{\Tr\big[\Pi_{\parC_{12},\parB_1}\Gamma_{1,2}\big]}{N\dim\parB \dim\parC}
\bigg|
\leq \sqrt\frac{\dim\parB}{(N-1)\dim\parC},
\end{multline*}
where the second and third equalities are due to \refeqn{DeltaViaProj} and \refeqn{DeltaOnProj},
respectively. We obtain the same inequality with $\parA'$ instead of $\parA$, and the result follows from the triangle inequality.
\end{proof}

\begin{cor} \label{cor:helper3}
Consider $\parA\vdash N$, $\parC\ll_{rc}\parA$, and $\parB,\parB'\vdash N-1$ such that $\parC<\parB<\parA$, $\parC<\parB'<\parA$, and $\parB$ appears after $\parB'$ in the lexicographical order. If $\|\Delta_1\circ\Gamma'\|\leq1$, then
\[
\bigg|
\alpha_{id,\parC}^\parA
\frac{\sqrt{d_{\parA,\parC}^2-1}}{2d_{\parA,\parC}}
-
\alpha_{sgn,\parC}^\parA
\frac{d_{\parA,\parC}-1}{2d_{\parA,\parC}}
\bigg|
\leq\sqrt{\frac{\dim\parB}{(N-1)\dim\parC}},
\]
\[
\bigg|
\alpha_{id,\parC}^\parA
\frac{\sqrt{d_{\parA,\parC}^2-1}}{2d_{\parA,\parC}}
+
\alpha_{sgn,\parC}^\parA
\frac{d_{\parA,\parC}+1}{2d_{\parA,\parC}}
\bigg|
\leq\sqrt{\frac{\dim\parB'}{(N-1)\dim\parC}},
\]
\end{cor}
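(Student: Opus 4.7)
The plan is to reduce this to \reflem{helper1} by extracting two specific off-diagonal blocks of $\Delta_1\circ\Gamma_{1,2}$ that isolate the two combinations of $\alpha^\parA_{id,\parC}$ and $\alpha^\parA_{sgn,\parC}$ appearing in the inequalities. First I will substitute \refeqn{PiId} and \refeqn{PiSgn} into \refeqn{Gamma12Blocks} to rewrite the $(\parA,\parC)$-component of $\Gamma_{1,2}$ as a linear combination of $\PiA_{\parC_{12},\parB_1}$, $\PiA_{\parC_{12},\parB'_1}$, and the two transporters $\PiA_{\parC_{12},\parB_1\leftarrow\parC_{12},\parB'_1}$ and $\PiA_{\parC_{12},\parB'_1\leftarrow\parC_{12},\parB_1}$; all other terms in \refeqn{Gamma12Blocks} come from orthogonal irreps and will be killed once we sandwich between the outer projectors.

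The second step is to determine how $\Delta_1\circ$ acts on each of these four operators. The two transporters are fixed by $\Delta_1\circ$ by \refeqn{DeltaOnTrans}, since they are precisely the case $i=1,j=2,\parB\neq\parB'$ addressed there. The two diagonal projectors, combined with the commutation relation \refeqn{DeltaComm} applied to $\Pi_{\parC_{12}}$ and $\Pi_{\parB_1}$ and the formula \refeqn{DeltaOnProj} for $\Delta_1\circ\PiA_\parA$, give $\Delta_1\circ\PiA_{\parC_{12},\parB_1}=\PiA_{\parC_{12},\parB_1}-\tfrac{\dim\parA}{N\dim\parB}\Pi_{\parC_{12},\parB_1}$, and both of these terms are killed when sandwiched between $\PiA_{\parC_{12},\parB_1}$ on the left and $\PiA_{\parC_{12},\parB'_1}$ on the right (using $\parB\neq\parB'$). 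Setting $d:=d_{\parA,\parC}$ for brevity, the surviving contributions from the off-diagonal transporters then yield
\begin{align*}
\PiA_{\parC_{12},\parB_1}(\Delta_1\circ\Gamma_{1,2})\PiA_{\parC_{12},\parB'_1} &= \Big(\alpha^\parA_{id,\parC}\tfrac{\sqrt{d^2-1}}{2d}+\alpha^\parA_{sgn,\parC}\tfrac{d+1}{2d}\Big)\PiA_{\parC_{12},\parB_1\leftarrow\parC_{12},\parB'_1},\\
\PiA_{\parC_{12},\parB'_1}(\Delta_1\circ\Gamma_{1,2})\PiA_{\parC_{12},\parB_1} &= \Big(\alpha^\parA_{id,\parC}\tfrac{\sqrt{d^2-1}}{2d}-\alpha^\parA_{sgn,\parC}\tfrac{d-1}{2d}\Big)\PiA_{\parC_{12},\parB'_1\leftarrow\parC_{12},\parB_1},
\end{align*}
and since each transporter has all singular values equal to $1$, the spectral norm of each block equals the absolute value of the scalar in front.

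The final step is to apply \reflem{helper1} to each block: as stated, the lemma bounds the first block's norm by $\sqrt{\dim\parB'/((N-1)\dim\parC)}$, which gives the second inequality of the corollary, and applying the lemma with the roles of $\parB$ and $\parB'$ swapped bounds the second block's norm by $\sqrt{\dim\parB/((N-1)\dim\parC)}$, which gives the first inequality. The only real care required lies in tracking the sign conventions of \refeqn{PiId} and \refeqn{PiSgn} so that the ``$-$''-combination gets paired with $\parB$ (and the $\dim\parB$ bound) while the ``$+$''-combination gets paired with $\parB'$; beyond that bookkeeping, the proof is routine.
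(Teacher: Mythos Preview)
Your proposal is correct and follows essentially the same route as the paper: isolate the two off-diagonal blocks $\PiA_{\parC_{12},\parB_1}(\Delta_1\circ\Gamma_{1,2})\PiA_{\parC_{12},\parB'_1}$ and $\PiA_{\parC_{12},\parB'_1}(\Delta_1\circ\Gamma_{1,2})\PiA_{\parC_{12},\parB_1}$, identify them (via \refeqn{PiId}, \refeqn{PiSgn}, \refeqn{DeltaOnTrans}) with the corresponding off-diagonal blocks of $\Gamma_{1,2}$ itself, read off the scalar coefficients, and apply \reflem{helper1}. The only cosmetic difference is that the paper first observes $\Pi_{\parC_{12},\parB'_1}\Gamma_{1,2}\Pi_{\parC_{12},\parB_1}=\PiA_{\parC_{12},\parB'_1}\Gamma_{1,2}\PiA_{\parC_{12},\parB_1}$ (since $\parA$ is the unique $N$-box diagram containing both $\parB$ and $\parB'$) and then invokes \refeqn{DeltaComm} and \refeqn{DeltaOnTrans} directly, whereas you expand term by term and kill the diagonal contributions with the $\parB\neq\parB'$ observation; both arguments yield the same two identities and the same bounds.
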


\begin{proof}
Since $\parA$ is the unique $N$-box Young diagram that has both $\parB$ and $\parB'$ as subdiagrams, we have
\[
\Pi_{\parC_{12},\parB'_{1}}\Gamma_{1,2} \Pi_{\parC_{12},\parB_{1}}
=
\PiA_{\parC_{12},\parB'_{1}}\Gamma_{1,2} \PiA_{\parC_{12},\parB_{1}}.
\]
Hence, due to \refeqn{DeltaOnTrans} and the commutativity relations \refeqn{DeltaComm}, we have
\[
\PiA_{\parC_{12},\parB'_{1}}(\Delta_1\circ\Gamma_{1,2}) \PiA_{\parC_{12},\parB_{1}}
=
\PiA(\Delta_1\circ(\Pi_{\parC_{12},\parB'_{1}}\Gamma_{1,2} \Pi_{\parC_{12},\parB_{1}}))\PiA
=
\PiA_{\parC_{12},\parB'_{1}}\Gamma_{1,2} \PiA_{\parC_{12},\parB_{1}}.
\]
The same holds with $\parB$ and $\parB'$ swapped. 
From \refeqn{PiId} and \refeqn{PiSgn}, we get that
\[
\PiA_{\parC_{12},\parB'_{1}}\Gamma_{1,2} \PiA_{\parC_{12},\parB_{1}}
=
\bigg(
\alpha_{id,\parC}^\parA
\frac{\sqrt{d_{\parA,\parC}^2-1}}{2d_{\parA,\parC}}
-
\alpha_{sgn,\parC}^\parA
\frac{d_{\parA,\parC}-1}{2d_{\parA,\parC}}
\bigg)
\PiA_{\parC_{12},\parB'_1\leftarrow\parC_{12},\parB_1},
\]
\[
\PiA_{\parC_{12},\parB_{1}}\Gamma_{1,2} \PiA_{\parC_{12},\parB'_{1}}
=
\bigg(
\alpha_{id,\parC}^\parA
\frac{\sqrt{d_{\parA,\parC}^2-1}}{2d_{\parA,\parC}}
+
\alpha_{sgn,\parC}^\parA
\frac{d_{\parA,\parC}+1}{2d_{\parA,\parC}}
\bigg)
\PiA_{\parC_{12},\parB_1\leftarrow\parC_{12},\parB'_1},
\]
and we apply \reflem{helper1} to complete the proof.
\end{proof}

\begin{lem} \label{lem:helper4}
Let $\parF$ be a Young diagram having  at most $N/2-2$ boxes and $\parE>\parF$. If $\|\Delta_1\circ\Gamma''\|\leq1$, then
\[
\Bigg|
\alpha^\parEB_{id,\parEB_{12}} -
\alpha^{\parFB}_{id,\parFB_{12}}
+
\frac
{2(\alpha^\parEB_{id,\parFB_{12}}-\alpha^\parEB_{id,\parEB_{12}})}
{d_{\parEB,\parFB_{12}}(d_{\parEB,\parFB_{12}}-1)}
\Bigg|
\leq 2\sqrt\frac{\dim\parFB_3}{\binom{N-1}{2}\dim\parFB_{123}}.
\]
\end{lem}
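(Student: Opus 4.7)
The plan is to adapt Corollary~\ref{cor:helper2} to the ``second-half'' setting: use $\Delta_3\circ\Gamma_{1,2}$ and the hypothesis $\|\Delta_1\circ\Gamma''\|\leq 1$ (via \refeqn{Half2Norm}) in place of $\Delta_1\circ\Gamma_{1,2}$ and $\|\Delta_1\circ\Gamma'\|\leq 1$, replace the transversal $R'$ by $R''$ (so the factor $N-1$ becomes $\binom{N-1}{2}$), and use the projector $\Pi^\parA_{\parFB_{123},\parFB_3}$ for $\parA\in\{\parEB,\parFB\}$. Both choices are well-defined, since $\parFB_3<\parEB$ (remove the $\parE/\parF$ box) and $\parFB_3<\parFB$ (remove the last first-row box). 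As $\Pi^\parA_{\parFB_{123},\parFB_3}$ is invariant under $\S_{\{1,2\}}\times\S_{[4,N]}\times\S_\Sigma$, the averaging argument of Lemma~\ref{lem:helper1} yields the analog
\[
\big\|\Pi^\parA_{\parFB_{123},\parFB_3}(\Delta_3\circ\Gamma_{1,2})\Pi^\parA_{\parFB_{123},\parFB_3}\big\|\leq\sqrt{\frac{\dim\parFB_3}{\binom{N-1}{2}\dim\parFB_{123}}},
\]
and following Corollary~\ref{cor:helper2} verbatim---equating this norm with the normalized absolute trace and converting $\Tr[\Pi(\Delta_3\circ\Gamma_{1,2})]=\Tr[(\Delta_3\circ\Pi)\Gamma_{1,2}]$ through \refeqn{DeltaOnProj}---produces a bound on the difference of the two normalized traces $\Tr[\Pi^\parA_{\parFB_{123},\parFB_3}\Gamma_{1,2}]/(\dim\parA\dim\parFB_{123})$ for $\parA\in\{\parEB,\parFB\}$ of size $2\sqrt{\dim\parFB_3/(\binom{N-1}{2}\dim\parFB_{123})}$.

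I next evaluate each trace. For $\parA=\parFB$ the skew shape $\parFB/\parFB_{123}$ is three boxes in a single row, so $\parFB_{123}$ has multiplicity $1$ in $\parFB\!\downarrow_{\S_{[4,N]}}$, realized through the unique chain $\parFB\to\parFB_{12}\to\parFB_{123}$. Hence $\Pi^\parFB_{\parFB_{123},\parFB_3}$ sits inside the $\parFB_{12}$-isotypical of $\S_{[3,N]}$, only the $\alpha^\parFB_{id,\parFB_{12}}$ block of $\Gamma_{1,2}$ contributes, and $\Tr[\Pi^\parFB_{\parFB_{123},\parFB_3}\Gamma_{1,2}]=\alpha^\parFB_{id,\parFB_{12}}\dim\parFB_{123}\dim\parFB$. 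For $\parA=\parEB$ the shape $\parEB/\parFB_{123}$ has two end-of-row boxes and one box below, so the multiplicity is $3$; the $id$-part of the $\parFB_{123}\times\parEB$-isotypical is two-dimensional, carrying the $\S_{[3,N]}$-chain basis $\{\Pi^\parEB_{id,\parFB_{123},\parEB_{12}},\Pi^\parEB_{id,\parFB_{123},\parFB_{12}}\}$ and the $\S_{[N]\setminus\{3\}}$-chain basis $\{\Pi^\parEB_{id,\parFB_{123},\parEB_3},\Pi^\parEB_{id,\parFB_{123},\parFB_3}\}$. Inverting the rotation \refeqn{ThetaDec1} (which is symmetric in the two bases, the inverse differing only in the sign of the off-diagonal transporter) gives
\[
\Pi^\parEB_{id,\parFB_{123},\parFB_3}=\Big(1-\tfrac{2}{\dd(\dd-1)}\Big)\Pi^\parEB_{id,\parFB_{123},\parEB_{12}}+\tfrac{2}{\dd(\dd-1)}\Pi^\parEB_{id,\parFB_{123},\parFB_{12}}+T',
\]
where $T'$ is a transporter between the $(id,\parEB_{12})$- and $(id,\parFB_{12})$-instances. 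Since the transporter components of $\Gamma_{1,2}$ in \refeqn{Gamma12Blocks} go only between $id$ and $sgn$ within a fixed $\parC_{12}$, we have $\Tr[T'\Gamma_{1,2}]=0$, and therefore
\[
\Tr[\Pi^\parEB_{\parFB_{123},\parFB_3}\Gamma_{1,2}]=\Big((1-\tfrac{2}{\dd(\dd-1)})\alpha^\parEB_{id,\parEB_{12}}+\tfrac{2}{\dd(\dd-1)}\alpha^\parEB_{id,\parFB_{12}}\Big)\dim\parFB_{123}\dim\parEB.
\]
Inserting the two trace values into the trace-difference bound of the first paragraph produces precisely the inequality claimed in the lemma.

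The delicate step is the evaluation for $\parA=\parEB$: inverting \refeqn{ThetaDec1} within the two-dimensional $id$-part of the $\parFB_{123}\times\parEB$-isotypical so as to express $\Pi^\parEB_{id,\parFB_{123},\parFB_3}$ in the $\S_{[3,N]}$-chain basis, and then checking that the cross-term transporter $T'$ drops out of the trace against $\Gamma_{1,2}$ because of the restricted structure of the transporter blocks appearing in \refeqn{Gamma12Blocks}.
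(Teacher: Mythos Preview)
Your proposal is correct and follows essentially the same route as the paper's proof: both replace the $R'$-averaging of \reflem{helper1}/\refcor{helper2} by the $R''$-averaging dictated by \refeqn{Half2Norm}, bound $\|\Pi^\parA_{\parFB_{123},\parFB_3}(\Delta_3\circ\Gamma_{1,2})\Pi^\parA_{\parFB_{123},\parFB_3}\|$ for $\parA\in\{\parEB,\parFB\}$, convert to a trace-difference bound via \refeqn{DeltaOnProj}, and then identify the two normalized traces with the stated $\alpha$-combinations. The only cosmetic difference is in the last step: the paper computes $\PiEb_{\parFB_{123},\parFB_3}\Gamma_{1,2}\PiEb_{\parFB_{123},\parFB_3}$ directly from the overlap \refeqn{overlap}, whereas you first invert the two-dimensional rotation \refeqn{ThetaDec1} to write $\PiEb_{\parFB_{123},\parFB_3}$ in the $\S_{[3,N]}$-chain basis and then trace; the two computations are equivalent and yield the same coefficients $1-2/\big(\dd(\dd-1)\big)$ and $2/\big(\dd(\dd-1)\big)$.
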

\begin{proof}
 Note that $\PiEb_{\parFB_{123},\parFB_3}(\Delta_3\circ\Gamma_{1,2})$ can be expressed as a linear combination of $\PiEb_{\parFB_{123},\parFB_3}$ and 
$\PiEb_{\parFB_{123},\parFB_3\leftarrow id,\parFB_{123},\parEB_3}$, while
$\Pi^{\parFB}_{\parFB_{123}}(\Delta_3\circ\Gamma_{1,2})$ is proportional to $\Pi^{\parFB}_{\parFB_{123}}$.
 Similarly to \reflem{helper1}, we can show that
\[
 \big\|
\PiEb_{\parFB_{123},\parFB_3}(\Delta_3\circ\Gamma_{1,2})
\PiEb_{\parFB_{123},\parFB_3}
\big\|
\leq\sqrt\frac{\dim\parFB_3}{\binom{N-1}{2}\dim\parFB_{123}}
\qqAnd
\big\|
\Pi^{\parFB}_{\parFB_{123}}(\Delta_3\circ\Gamma_{1,2})
\Pi^{\parFB}_{\parFB_{123}}
\big\|
\leq\sqrt\frac{\dim\parFB_3}{\binom{N-1}{2}\dim\parFB_{123}},
\]
where, instead of \refeqn{PermSum1}, we have to use (analogously proven)
\[
 \sum_{\pi\in R''}
V_{\pi} \PiEb_{\parFB_{123},\parFB_3} V_{\pi^{-1}}
=
\binom{N-1}{2}
\frac{\dim\parFB_{123}}
{\dim\parFB_{3}}
\PiEb_{\parFB_3}
\qqAnd
 \sum_{\pi\in R''}
V_{\pi} \PiEb_{id,\parFB_{123},\parEB_3\leftrightarrow\parFB_{123},\parFB_3} V_{\pi^{-1}}
= 0.
\]

Then, similarly to \refcor{helper2}, we get
\[
 \Bigg|
\frac{\Tr[\PiEb_{\parFB_{123},\parFB_3}\Gamma_{1,2}]}
{\dim\parEB\dim\parFB_{123}}
-
\frac{\Tr[\Pi^{\parFB}_{\parFB_{123}}\Gamma_{1,2}]}
{\dim\bar\parF\dim\parFB_{123}}
\Bigg|
\leq 2\sqrt\frac{\dim\parFB_3}{\binom{N-1}{2}\dim\parFB_{123}}.
\]
We conclude by noticing that
\[
\Pi^{\parFB}_{\parFB_{123}}\Gamma_{1,2}
=
\Pi^{\parFB}_{\parFB_{123}}
\big(
\alpha_{id,\parFB_{12}}^\parFB \Pi^\parFB_{\parFB_{12}}
\big)
=
\alpha_{id,\parFB_{12}}^\parFB \Pi^\parFB_{\parFB_{123}}
\]
and, due to \refeqn{overlap},
\begin{multline*}
\PiEb_{\parFB_{123},\parFB_3} \Gamma_{1,2} \,\PiEb_{\parFB_{123},\parFB_3}
=
\PiEb_{\parFB_{123},\parFB_3}
\big(
\alpha^\parEB_{id,\parEB_{12}} \PiEb_{\parEB_{12}}
+
\alpha^\parEB_{id,\parFB_{12}} \PiEb_{id,\parFB_{12}}
\big)
\PiEb_{\parFB_{123},\parFB_3}
\\
=
\bigg(
\Big(
1-\frac2{d_{\parEB,\parFB_{12}}(d_{\parEB,\parFB_{12}}-1)}
\Big)
\alpha^\parEB_{id,\parEB_{12}}
+
\frac2{d_{\parEB,\parFB_{12}}(d_{\parEB,\parFB_{12}}-1)}
\alpha^\parEB_{id,\parFB_{12}}
\bigg)
\PiEb_{\parFB_{123},\parFB_3}.
\end{multline*}

\end{proof}

\subsection{Proof of \refclm{alphas} }

We can assume that all the coefficients $\beta$ in the expression \refeqn{GammaBlocks} for $\Gamma$ are at most $N$, as $N$ is the trivial upper bound on the quantum query complexity of {\sc Element Distinctness}. That, in turn, means that we can assume that the coefficients $\alpha$ in Point \ref{pt1}, Point \ref{pt2}, and Point \ref{pt3} of \refclm{alphas} are, respectively, at most $\O(1)$, $\O(\sqrt N)$, and $\O(N)$. Let us prove sequentially every point of the claim.

\paragraph{Point \ref{pt1}.}

Consider $k\in\O(1)$, $\parF\vdash k$, and $\parE>\parF$, so $d_{\parEB,\parFB_{12}}=N-\O(1)$ and $\dim\parFB_{3}/\dim\parFB_{123}\in\Theta(1)$.
From \reflem{helper4}, we get that
\(
|  \alpha^\parEB_{id,\parEB_{12}}  -  \alpha^{\parFB}_{id,\parFB_{12}}  |  \in  \O(1/N),
\)
which proves that $\alpha^{\parFB}_{id,\parFB_{12}}=N^{-1/3}+\O(1/N)$ by the induction over $k$, where we take $\alpha^{(N)}_{id,(N-2)}=N^{-1/3}$ as the base case.

\paragraph{Point \ref{pt2}.}

Consider $\parF\vdash\O(1)$ and $\parE>\parF$, so $\dim\parFB_{1}/\dim\parFB_{12}\in\Theta(1)$. 
 From the first inequality of \refcor{helper3} (in which we choose $\parA=\parEB$ and $\parC=\parFb_{12}$, forcing $\parB=\parFb_{1}$), we get that
\(
| \alpha_{id,\parFB_{12}}^\parEB - \alpha_{sgn,\parFB_{12}}^\parEB | \in\O(1/\sqrt N).
\)
From \refcor{helper2} (in which we choose $\parC=\parFB_{12}$, $\parB=\parFB_{1}$, $\parA=\parFB$, and $\parA'=\parEB$), we get
\[
\bigg|
\frac{\Tr[\Pi^{\parFB}_{\parFB_{12}}\Gamma_{1,2}]}{\dim\parFB\dim\parFB_{12}}
-
\frac{\Tr[\Pi^{\parEB}_{\parFB_{12},\parFB_1}\Gamma_{1,2}]}{\dim\parEB\dim\parFB_{12}}
\bigg|
\in \O(1/\sqrt N),
\]
where we have 
\[
\Pi^{\parFB}_{\parFB_{12}}\Gamma_{1,2}
= \alpha^{\parFB}_{id,\parFB_{12}} \Pi^{\parFB}_{\parFB_{12}}
\qqAnd
\Pi^{\parEB}_{\parFB_{12},\parFB_1} \Gamma_{1,2}
\Pi^{\parEB}_{\parFB_{12},\parFB_1}
=
\bigg(
\alpha^\parEB_{id,\parFB_{12}}
\frac{d_{\parEB,\parFB_{12}}-1}{2d_{\parEB,\parFB_{12}}}
+
\alpha^\parEB_{sgn,\parFB_{12}}
\frac{\sqrt{d_{\parEB,\parFB_{12}}^2-1}}{2d_{\parEB,\parFB_{12}}}
\bigg)
\Pi^{\parEB}_{\parFB_{12},\parFB_1}
\]
from \refeqn{PiId} and \refeqn{PiSgn}.
Therefore,
\(
| 
\alpha^\parFB_{id,\parFB_{12}} - 
(\alpha_{id,\parFB_{12}}^\parEB + \alpha_{sgn,\parFB_{12}}^\parEB)/2 | \in\O(1/\sqrt N),
\)
which together with previously proven $\alpha^\parFB_{id,\parFB_{12}}=N^{-1/3}+\O(1/N)$ 
and
\(
| \alpha_{id,\parFB_{12}}^\parEB - \alpha_{sgn,\parFB_{12}}^\parEB | \in\O(1/\sqrt N)
\)
imply
$\alpha_{id,\parFB_{12}}^\parEB = N^{-1/3}+\O(1/\sqrt N)$ and
$\alpha_{sgn,\parFB_{12}}^\parEB = N^{-1/3}+\O(1/\sqrt N)$.

\paragraph{Point \ref{pt3}.}

Consider $\parA\vdash N$ and $\parC\ll_c\parA$ that is obtained from $\parA$ by removing two boxes in different columns below the first row. Let us consider two cases.

Case 1: $\parC\ll_{rc}\parA$. Let $\parB,\parB'\vdash N-1$ be such that $\parC<\parB<\parA$, $\parC<\parB'<\parA$, and $\parB\neq\parB'$. Since $d_{\parA,\parC}\geq 2$,  $\dim\parB/\dim\parC\in\Theta(N)$, and $\dim\parB'/\dim\parC\in\Theta(N)$, both inequalities of \refcor{helper3} together imply $\alpha^\parA_{id,\parC}=\O(1)$ and $\alpha^\parA_{sgn,\parC}=\O(1)$.

Case 2: $\parC\ll_c\parA$ and $\parC\not\!\ll_r\parA$ (i.e., $\parC$ is obtained from $\parA$ by removing two boxes in the same, but not the first, row). Let $\parB\vdash N-1$ be the unique Young diagram that satisfies $\parC<\parB<\parA$, and let $\parA'$ be obtained from $\parB$ by adding a box in the first row. 
For Point \ref{pt2} we already have shown that $\alpha^{\parA'}_{id,\parC}\in o(1)$ and $\alpha^{\parA'}_{sgn,\parC}\in o(1)$, so, from \refcor{helper2} and $\dim\parB/\dim\parC\in\Theta(N)$, we get that $\alpha^\parA_{id,\parC}\in \O(1)$.

\end{document}